%% file: main.tex
\pdfoutput=1

\documentclass[12pt, reqno]{amsart}

\usepackage{main}

\hypersetup{
  pdftitle={How Reliable Are Psychological Measurements? The Distribution of
    Marginal Reliability Across 889 Item-Response Datasets},
  pdfauthor={JoonHo Lee},
}

\graphicspath{{figures/}}
\usepackage{titletoc}

\usepackage[style=apa, backend=biber, natbib=true]{biblatex}
\newgeometry{margin=1.25in}

\title{\Large How Reliable Are Psychological Measurements?\\
The Distribution of Marginal Reliability Across 889 Item-Response Datasets}

\author{JoonHo Lee}

\date{\footnotesize July 2026. \\[0.5em]
Lee: College of Education, The University of Alabama, Tuscaloosa, AL, USA.
\texttt{jlee296@ua.edu}.
}

\renewcommand{\theHequation}{\thesection.\arabic{equation}}
\renewcommand{\theHfigure}{\thesection.\arabic{figure}}
\renewcommand{\theHtable}{\thesection.\arabic{table}}

\begin{document}


\begin{abstract}
Nearly every quantitative study in psychology reports a reliability
coefficient, so the field knows a great deal about the reliability that
authors choose to publish. It knows much less about the reliability of the
data psychology actually produces, because published coefficients pass
through decisions about what to compute and what to report. We therefore
measure reliability directly, applying the same estimators under the same
rules to 889 datasets from the Item Response Warehouse, a public collection
of item-response data that spans cognitive tests, clinical screeners,
personality inventories, and attitude scales. Three findings emerge. Low
reliability is common: even under a lenient definition of reliability,
30\% of datasets fall below the conventional .80 threshold. The variation
across datasets is real rather than statistical, since estimation noise
accounts for only about one percent of it. Finally, the answer depends on
the definition itself: under a strict definition the share below .80 rises
to 52\%, a difference large enough to change what one concludes about the
field. We conclude that a reliability report should say which definition it
uses, attach a measure of uncertainty, and give a strict coefficient
alongside a lenient one.
\end{abstract}

\maketitle

\noindent\textbf{Keywords:} reliability; item response theory; marginal
reliability; measurement precision; meta-analysis; Item Response Warehouse

\pagestyle{plain}

\newpage


\input{sections/body}


\bigskip

\noindent\textbf{Acknowledgments.}\enspace
The author thanks the Item Response Warehouse team for building and
maintaining the public data resource on which this study depends.

\smallskip

\noindent\textbf{Software and reproducibility.}\enspace
The analysis is a fully scripted R pipeline (R 4.6.0): item response models
are fitted with \pkg{mirt} 1.46.1 \citep{chalmers_mirt_2012} and \pkg{TAM}
4.3-25 \citep{robitzsch_tam_2026}, person-bootstrap standard errors are
computed on a 30-worker cloud instance with deterministic per-dataset
seeds, analytic standard errors follow \citet{ark_standard_2025} and
\citet{andersson_large_2018} (\pkg{irtreliability};
\citealt{andersson_irtreliability_2022}), and the three-level deconvolution
and meta-regression use \pkg{metafor} 5.0-1
\citep{viechtbauer_conducting_2010} with CR2 robust variance estimation
from \pkg{clubSandwich} 0.7.0 \citep{pustejovsky_clubsandwich_2026}.
Released floats and figures regenerate from the canonical derived tables
with a single build command, and the accompanying artifact manifest records
their input, script, and output SHA-256 hashes. The release also includes
the derived unit-level outputs (point estimates, standard errors, and
moderators) for the datasets it is able to redistribute; see the data
availability statement. This claim concerns the
canonical-derived-to-manuscript build; it does not assert that every
historical raw-data fit is invalidated and rerun by a content-addressed
ledger.

\smallskip

\noindent\textbf{Data availability.}\enspace
All item-response data analyzed here come from the public Item Response
Warehouse \citep[IRW;][]{domingue_introduction_2025},
\url{https://datapages.github.io/irw/}. The analysis uses a frozen local
snapshot (v1.2, acquired April 2026). Of 930 unique eligible administration
units attempted, 889 produced valid primary estimates. The replication
package, \url{https://github.com/joonho112/irw-reliability-replication},
distributes the derived unit-level table underlying every analysis (up to six
reliability point estimates with coefficient-specific coverage, bootstrap and
analytic standard errors, and design moderators) together with the code that
regenerates every number, table, and figure reported here without access to
the item responses. The released table covers 879 of the 889 datasets: ten
sources that the Warehouse marks as not publicly re-shareable are withheld at
the row level, which moves the pooled coefficients by less than .001 and the
deconvolved sub-.80 shares by less than .24 percentage points. Original
response-level data remain governed by the licenses of the primary IRW
sources.


\printbibliography[title={References}]


\newpage
\appendix
\setcounter{section}{0}
\renewcommand{\thesection}{\Alph{section}}
\numberwithin{equation}{section}
\numberwithin{figure}{section}
\numberwithin{table}{section}

\renewcommand{\theHequation}{Supp.\thesection.\arabic{equation}}
\renewcommand{\theHfigure}{Supp.\thesection.\arabic{figure}}
\renewcommand{\theHtable}{Supp.\thesection.\arabic{table}}

\crefname{section}{Appendix}{Appendices}
\Crefname{section}{Appendix}{Appendices}

\begin{center}
{\Large\bfseries Online Supplement}\\[1em]
{\large How Reliable Are Psychological Measurements?\\
The Distribution of Marginal Reliability Across 889 Item-Response Datasets}\\[1.5em]
{\normalsize JoonHo Lee}
\end{center}

\vspace{1.5em}

\noindent This supplement accompanies the main text. Section, equation,
table, and figure numbers below are prefixed by the appendix letter (A.1,
B.1, \dots); references to the main text cite it by section, figure, and
table number.

\vspace{1.5em}

\startcontents[appendices]
\printcontents[appendices]{}{1}{\textbf{Contents}\vskip1em\hrule\vskip1em}
\vskip1em\hrule\vskip2em

\input{appendices/appendix_A}
\input{appendices/appendix_B}
\input{appendices/appendix_C}
\input{appendices/appendix_D}
\input{appendices/appendix_E}
\input{appendices/appendix_F}
\input{appendices/appendix_G}

\end{document}

%% file: sections/body.tex

\section{Introduction}\label{sec:intro}

In psychology, nearly every quantitative study reports a reliability
coefficient, and the reporting follows a settled routine. A coefficient is
computed on the study's own sample and compared with a conventional
threshold of .70, .80, or .90, usually without a standard error
\citep{parsons_psychological_2019}. These thresholds are commonly attributed
to \citet{nunnally_psychometric_1994}, although what Nunnally recommended
was .70 for early-stage research and .90 for applied decisions
\citep{lance_sources_2006, greco_meta-analysis_2018}. Treating measurement
precision so casually has consequences that are by now well documented.
\Citet{loken_measurement_2017} showed that measurement error, combined with
selection on significance, can reverse inferential conclusions and not
merely attenuate them. \Citet{hedge_reliability_2018} and
\citet{rouder_hierarchical_2024} showed that robust experimental effects can
coexist with unreliable individual-difference measures, and
\citet{flake_measurement_2020} documented how questionable measurement
practices propagate through substantive literatures.

Much is known about the distribution of \emph{reported} reliability
coefficients. \Citet{greco_meta-analysis_2018} cumulated coefficient alpha
across 1{,}675 independent samples and concluded that reported values almost
always exceed .70 and generally exceed .80. \Citet{hussey_aberrant_2025}
extracted more than 30{,}000 alpha values from APA-journal articles, a
further 89{,}000 from the industrial--organizational literature and 67{,}000
from the PsycTests database, and found excess mass precisely at the
conventional thresholds, which indicates that the values reaching print have
passed through a filter. Since \citet{vachahaase_reliability_1998}, the
reliability generalization literature has meta-analyzed reported coefficients
one instrument at a time \citep{thompson_psychometrics_2000,
sanchezmeca_recommended_2013}. Its methodological branch has established that
reliability varies materially across administrations of the same instrument,
and also that the analytic choices made in synthesizing coefficients affect
the conclusions \citep{lopez-ibanez_reliability_2024}.
\Citet{scherer_tutorial_2020} went a step beyond harvesting. By
meta-analyzing correlation matrices instead of reported coefficients, they
could estimate the same coefficient uniformly across samples, and for a
single instrument, the Rosenberg Self-Esteem Scale, they obtained model-based
reliabilities ranging from .41 to .86 across 34 samples. On one point these
literatures agree: reliability is a property of an administration and not of
an instrument.

What none of them can supply is the distribution of reliability as
\emph{computed}, that is, the coefficients obtained when the same estimators
are applied under the same rules to every dataset in a defined collection,
whether or not anyone chose to compute or publish them. This is the quantity
a reader wants when a paper reports no coefficient at all, and it is the
natural baseline against which the optimism of the reported distribution can
be assessed. Obtaining it requires raw item responses at scale. Until
recently the response matrices were scattered, heterogeneous, and mostly
private.

The Item Response Warehouse \citep[IRW;][]{domingue_introduction_2025} has
changed this. It harmonizes hundreds of publicly available item-response
datasets into a common long format, covering cognitive tests, clinical
screeners, personality inventories, attitude scales, and behavioral tasks. A
research program has already formed around it. \Citet{zhang_realistic_2025}
characterized item-difficulty distributions across 73 datasets,
\citet{nalbandyan_signposts_2026} studied the nominal-to-ordinal transition
across 245, and \citet{domingue_intermodel_2024} compared item response
models across 89. More recently, \citet{liu_comparing_2026} compared
multidimensional model specifications across 25 large datasets, and
\citet{gilbert_estimating_2025} and \citet{gilbert_conditional_2026}
conducted item-level meta-analyses of treatment effects and of response-time
dependencies. Reliability has not yet been examined in this way. In this
paper we attempt every one of the 930 eligible datasets in a frozen snapshot
of the warehouse; 889 produce valid estimates, and for each we compute up to
six reliability coefficients together with their standard errors.

Before any such computation can be interpreted, a difficulty has to be faced.
What is called marginal reliability is not a single estimand but a family of
related estimands, distinguished by which error variance is averaged and over
which distribution of the latent trait. The members are not interchangeable.
We prove a total order among three information-based latent-scale
functionals, so that two defensible answers to the question of how reliable
a set of scores is can differ systematically, and by amounts that matter in
practice. An audit reporting one coefficient per dataset would therefore have
made a choice without saying so. We instead declare two co-primary
operational paths, chosen to contrast a strict separation coefficient with a
lenient posterior coefficient, add four auxiliary and anchor coefficients,
and treat the disagreement among the six as an object of study.

Three results follow from the corpus. First, low reliability is common.
Depending on the estimand, between three datasets in ten and one in two fall
below the conventional .80 threshold, and under the strict estimand roughly
one dataset in nine falls below .50. Second, the dispersion is genuine and
not statistical. Because every point estimate carries a bootstrap standard
error, a three-level random-effects model can separate estimation noise from
true heterogeneity, and the noise accounts for about one percent of the
observed variance. Third, the answer depends materially on the estimator.
Six defensible estimators of the same corpus give pooled reliabilities from
.794 to .868 and sub-.80 shares from 26\% to 52\%, and the disagreement
between the two co-primary estimands is itself structured, being largest for
personality scales.

Two features of the design bear on how the conclusions should be read.
Eligibility rules and estimator definitions were fixed before the corpus
run. The item-key alignment rule is the explicit exception: it was introduced
after a keying defect was discovered during the analysis and then frozen
before the correction rerun. This distinction follows the recommendation that
analytic decisions on preexisting data be settled before results are known
\citep{weston_recommendations_2019}. Our analysis is therefore a census of
the eligible warehouse and not a curated sample. In addition, the corpus was
completed in two waves, with the largest datasets arriving last, so the
composition of the headline can be checked directly. Completing the corpus
in fact made the headline shares slightly worse rather than better, since
the large-sample wave consists mainly of internet panels and surveys rather
than professionally constructed tests (Section~\ref{sec:composition}).

The plan of the paper is as follows. In Section~\ref{sec:family} we develop
the family of estimands, prove the ordering, and declare the estimator
portfolio. Section~\ref{sec:corpus} describes the corpus. In
Section~\ref{sec:methods} we present the estimation pipeline, the
standard-error strategy and its cross-validation, and the deconvolution and
meta-regression models. Results are reported in Section~\ref{sec:results}.
Finally, in Section~\ref{sec:discussion} we discuss implications for
reporting practice, the relation to the literature on reported coefficients,
and limitations. An online supplement (Appendices
\ref{app:family}--\ref{app:additional}) contains proofs,
corpus-construction detail, full tables, and sensitivity analyses.

\section{Estimands for marginal reliability}\label{sec:family}

Every version of marginal reliability translates the classical variance
ratio, true-score variance over total variance \citep{lord_novick_1968},
into item response theory (IRT):
\begin{equation}\label{eq:skeleton}
\rho \;=\; \frac{\sigma^2_\theta}{\sigma^2_\theta + \text{(error summary)}},
\end{equation}
or equivalently $\rho = 1 - (\text{error summary})/(\text{total summary})$.
The versions differ in two choices. The first is which error variance is
summarized: the conditional sampling variance $1/\Jinfo(\theta)$ of a
maximum-likelihood-type score, the posterior variance of a Bayesian score, or
the error variance of an observed sum score. The second is the distribution
over which the summary is taken, either an assumed latent density $g$, in
practice almost always $N(0,1)$, or the realized sample. In
\cref{fig:family} we map the resulting family and locate every estimator
used in this paper.

\begin{figure}[t]
\centering
\begin{adjustbox}{max width=\textwidth}
\begin{tikzpicture}[
  font=\footnotesize,
  box/.style={draw, rounded corners=2pt, align=center, inner sep=5pt,
              minimum height=2.4em, text width=0.215\textwidth,
              execute at begin node={\hyphenpenalty=10000\relax}},
  est/.style={draw, rounded corners=2pt, align=center, inner sep=4pt,
              text width=0.215\textwidth, font=\scriptsize,
              execute at begin node={\hyphenpenalty=10000\relax}},
  lab/.style={font=\scriptsize\itshape, text=black!60},
  arr/.style={-{Stealth[length=2mm]}, black!55, thin},
  node distance=4mm and 3.2mm]

\definecolor{eapblue}{RGB}{42,120,214}
\definecolor{wleorange}{RGB}{235,104,52}
\definecolor{auxgray}{RGB}{110,108,100}

\node[box, text width=0.62\textwidth, fill=black!4] (root)
  {\textbf{Reliability as a variance ratio}\\[2pt]
   $\rho \,=\, \dfrac{\sigma^2_\theta}{\sigma^2_\theta + \text{error summary}}$};

\node[box, below=9mm of root.south west, anchor=north west, xshift=-24mm,
      draw=wleorange, thick] (b1)
  {\textbf{1. Marginal-SEM path}\\[1pt]
   $\wbar = \dfrac{\sigma^2_\theta}{\sigma^2_\theta + \E_G[1/\Jinfo(\theta)]}$};
\node[box, right=of b1, draw=eapblue, thick] (b2)
  {\textbf{2. Posterior path}\\[1pt]
   $\rhopsd = 1 - \dfrac{\E[\PSD^2]}{\sigma^2_\theta}$};
\node[box, right=of b2, draw=black!35] (b3)
  {\textbf{3. Mean information}\\[1pt]
   $\rhotilde = \dfrac{\sigma^2_\theta \bar{\Jinfo}}{\sigma^2_\theta \bar{\Jinfo} + 1}$};
\node[box, right=of b3, draw=black!35] (b4)
  {\textbf{4. Observed score}\\[1pt]
   sum-score scale\\ ($\alpha$, $\lambda_2$, TRC)};

\draw[arr] (root.south) ++(0,-1mm) -- ++(0,-2.5mm) -| (b1.north);
\draw[arr] (root.south) ++(0,-1mm) -- ++(0,-2.5mm) -| (b2.north);
\draw[arr] (root.south) ++(0,-1mm) -- ++(0,-2.5mm) -| (b3.north);
\draw[arr] (root.south) ++(0,-1mm) -- ++(0,-2.5mm) -| (b4.north);

\node[est, below=of b1, draw=wleorange, fill=wleorange!8] (e1)
  {\textbf{WLE separation}\\ (empirical; co-primary)\\
   $1 - \overline{\mathrm{SE}^2}/\Var(\hat\theta)$, unbounded below};
\node[est, below=of b2, draw=eapblue, fill=eapblue!8] (e2)
  {\textbf{EAP empirical reliability}\\ (empirical; co-primary)\\
   $\Var(\hat\theta)/[\Var(\hat\theta)+\overline{\PSD^2}]$};
\node[est, below=of b3, draw=black!30, fill=black!3, text=black!60] (e3)
  {diagnostic only\\ (not reported)};
\node[est, below=of b4, draw=auxgray, fill=black!3] (e4)
  {$\alpha$, $\lambda_2$ (anchors;\\ model-free) + TRC\\ (expected-form auxiliary)};

\node[est, below=1.6mm of e2, draw=auxgray, fill=black!3] (e2b)
  {MRC (expected-form\\ auxiliary; $g = N(0,1)$)};

\draw[arr] (b1) -- (e1);
\draw[arr] (b2) -- (e2);
\draw[arr] (b3) -- (e3);
\draw[arr] (b4) -- (e4);
\draw[arr] (e2) -- (e2b);

\node[draw, rounded corners=2pt, fill=black!4, below=5mm of e2b,
      text width=0.62\textwidth, align=center, inner sep=5pt] (order)
  {\textbf{Theoretical order (\cref{thm:order}):}\quad
   $\rhotilde \;\ge\; \rho_{\mathrm{PSD}}^{(I)} \;\ge\; \wbar$,\quad equalities iff test
   information is constant over $\theta$};

\end{tikzpicture}
\end{adjustbox}
\caption{The family of marginal reliability estimands. The classical variance
ratio of \cref{eq:skeleton} branches into four estimands according to which
error variance is averaged. A second choice, between an assumed density $g$
and the realized sample, splits each latent-scale branch into expected and
empirical versions. The boxes below each branch show the estimators used in
this paper: two empirical co-primaries (WLE separation, orange; EAP empirical
reliability, blue), two expected-form auxiliaries (MRC, TRC), and two
model-free classical-test-theory anchors ($\alpha$, $\lambda_2$). The stated
order concerns the three information functionals of
Appendix~\ref{app:family}.}
\label{fig:family}
\end{figure}

\subsection{Four summaries of error variance}\label{sec:branches}

The first branch averages the error variance of maximum-likelihood-type
scores, the marginal standard error of measurement, giving
\begin{equation}\label{eq:wbar}
\wbar \;=\; \frac{\sigma^2_\theta}{\sigma^2_\theta +
\E_G\!\left[1/\Jinfo(\theta)\right]}.
\end{equation}
This is the lineage of \citet{green_technical_1984}, of
expected-inverse-information summaries \citep{cheng_comparison_2012,
kim_note_2012}, and of person separation reliability in the Rasch tradition
\citep{wright_masters_1982, adams_reliability_2005}. Since the integrand
$1/\Jinfo$ is unbounded, regions of $\theta$ where the test carries little
information can dominate the average, and finite-sample estimates can fall
below zero. This behavior is not a defect of the estimand. The quantity
$\wbar$ answers the signal-to-noise question faced by a user who scores
respondents with unshrunken, maximum-likelihood-type estimates
\citep{cronbach_signalnoise_1964}.

The second branch averages posterior variances of Bayesian scores, giving
\begin{equation}\label{eq:rhopsd}
\rhopsd \;=\; 1 - \frac{\E[\PSD^2]}{\sigma^2_\theta} \;\approx\; 1 -
\frac{1}{\sigma^2_\theta}\,\E_G\!\left[\frac{1}{\Jinfo(\theta) +
1/\sigma^2_\theta}\right],
\end{equation}
the reliability of expected a posteriori (EAP) scores
\citep{bock_adaptive_1982}. Here the prior variance truncates the integrand,
so that $\rhopsd$ is finite for any $g$ and cannot fall below zero. Shrinkage
thus yields a more stable quantity, and also a more lenient one. The second
expression is the standard Gaussian approximation; the formal results below
concern this information form, written $\rho_{\mathrm{PSD}}^{(I)}$.

The third branch averages the information function first,
$\bar{\Jinfo} = \E_G[\Jinfo(\theta)]$, and then transforms, yielding
$\rhotilde = \sigma^2_\theta\bar{\Jinfo}/(\sigma^2_\theta\bar{\Jinfo}+1)$
\citep{adams_reliability_2005}. Whether averaged information yields a
reliability at all has been debated \citep{doran_information_2005}. We treat
this design-effect-style summary as a diagnostic and do not report it
\citep[cf.][]{lee_reliability_2025}.

The fourth branch is the reliability of the unweighted sum score. It contains
the model-free coefficients $\alpha$ and $\lambda_2$
\citep{guttman_basis_1945, cronbach_coefficient_1951} together with the
model-implied test reliability coefficient \citep[TRC;][]{kim_estimation_2010,
andersson_large_2018}. We omit $\omega$ and the greatest lower bound. A
common argument is that they add structural commitments orthogonal to an
audit \citep{mcdonald_test_1999, revelle_reliability_2019}. A sharper one was
supplied by \citet{berge_greatest_2004}, who showed that
unidimensionality-based coefficients rest on a hypothesis that cannot be
tested with three test parts and is false with more, and that for more than
three parts the greatest lower bound and $\omega$ behave almost identically
in any case, including in their small-sample bias.

Branch 4 is defined on a different score scale from branches 1--3, and we use
it as an anchor rather than as a competitor. Two of its properties matter for
interpretation. First, $\alpha$ and $\lambda_2$ are lower bounds on
sum-score reliability, so their distance below it reflects the violation of
tau-equivalence and not a different estimand. In three simulation studies,
\citet{kelley_confidence_2016} found that no confidence-interval method for
$\alpha$ bracketed population reliability acceptably, and they recommended
against using $\alpha$ intervals for that purpose. Second, a lower bound is
still informative. \Citet{sijtsma_part_2021} argue that lower bounds function
as quality guarantees and that for approximately unidimensional data the
discrepancy is small. A long debate on $\alpha$'s bias and its alternatives
\citep{sijtsma_use_2009, mcneish_thanks_2018, raykov_importance_2023}
concerns this branch alone. Cronbach himself came to regard the coefficient
as a restricted special case within a larger variance-components system
\citep{cronbach_my_2004}. Note also that which scale is the more reliable,
the sum score or the latent estimate, is not fixed.
\Citet{culpepper_reliability_2013} showed analytically that the ordering
depends on the match between item locations and the latent distribution, the
sum score being the more reliable when items sit in the tails of the trait
distribution. Strictly speaking, the term marginal, in the sense of
population-averaged over $G$, applies to branches 1--3, and the
observed-score branch enters as a model-free anchor.

\subsection{Expected and empirical versions}\label{sec:versions}

Orthogonal to the choice of branch, each latent-scale estimand has an
expected version, integrated over an assumed $g$ that is almost always
$N(0,1)$, and an empirical version, averaged over the realized sample of
scores and their standard errors \citep{zhang_empirical_2006}. Expected
versions are sample-free, but they inherit the normality assumption, which is
itself open to question in an audit of this kind. Empirical versions are
assumption-light but sample-dependent, and under restricted or adaptively
selected samples they can behave dramatically, including negative separation
reliabilities. We therefore take the empirical versions as primary and the
expected versions, MRC for branch 2 and TRC for branch 4, as auxiliaries.

\subsection{The item facet}\label{sec:itemfacet}

Both choices so far hold the item set fixed, so that every coefficient in
\cref{fig:family} describes how precisely these particular items order
respondents. Generalizability theory asks a further question
\citep{brennan_generalizability_2001}. If items are regarded as a sample from
a content domain, a user who cares about the construct rather than the form
wants precision generalized over that domain as well. Treating items as a
random facet adds an item-by-person interaction component to the error term,
so that the generalizability coefficient is bounded above by its item-fixed
counterpart, with equality only when items are exchangeable replicates.

In practice the gap is not negligible. Working with item-level data from
eight studies, \citet{gilbert_item-level_2025} found that ignoring
cluster-by-item interaction overstates reliability by a median of .04 (mean
.09) on the reliability scale in a value-added setting, and item-level
reanalyses of randomized trials find that treating items as a sample rather
than as fixed adds uncertainty large enough to change conclusions
\citep{ahmed_heterogeneity_2025}. All six estimators in this paper are
item-fixed, and we do not estimate the item-random branch. Identifying it
requires alternate forms or a content model specifying the domain, neither of
which the warehouse supplies. The cited studies establish a direction of
concern, namely that conditioning on the realized items can omit
item-sampling uncertainty, but their effect sizes do not turn the present
sub-threshold shares into formal lower bounds. A reader interested in
generalizability over items should therefore treat this as an unresolved,
plausibly optimistic limitation. A second optimistic layer is discussed in
Section~\ref{sec:points}.

\subsection{An ordering of the latent-scale estimands}\label{sec:order}

The three latent-scale branches are not interchangeable, and their
disagreement has a sign.

\begin{theorem}\label{thm:order}
Let $0 < \sigma^2_\theta < \infty$, let $\Jinfo(\theta)>0$ $G$-almost surely,
and suppose $\E_G[\Jinfo]$ and $\E_G[1/\Jinfo]$ are finite. For the three
information functionals in Appendix~\ref{app:family},
\[
\rhotilde \;\ge\; \rho_{\mathrm{PSD}}^{(I)} \;\ge\; \wbar,
\]
with equalities if and only if $\Jinfo(\theta)$ is $G$-almost surely constant.
Under constant information $\Jinfo \equiv I$, all three reduce to
$\rho = I\sigma^2_\theta / (I\sigma^2_\theta + 1)$.
\end{theorem}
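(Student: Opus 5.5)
Write $s=\sigma^2_\theta$ and $X=\Jinfo(\theta)$ with $\theta\sim G$, so that $X>0$ almost surely and $\E X,\E[1/X]<\infty$. The three functionals are then functions of the law of $X$:
\[
\rhotilde=\frac{s\,\E X}{s\,\E X+1},\qquad
\rho_{\mathrm{PSD}}^{(I)}=1-\E\!\left[\frac{1}{sX+1}\right]=\E\!\left[\frac{sX}{sX+1}\right],\qquad
\wbar=\frac{s}{s+\E[1/X]}.
\]
The plan is to prove both inequalities with Jensen's inequality applied to a single function, $\phi(u)=u/(u+1)$. Written this way, $\rhotilde=\phi(s\,\E X)$ and $\rho_{\mathrm{PSD}}^{(I)}=\E[\phi(sX)]$. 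Substituting $v=1/X$ gives $\wbar=1/(1+\E[1/(sX)])=\psi(\E[1/X])$ with $\psi(v)=1/(1+v/s)$. We also have $\phi(sX)=\psi(1/X)$, since $sX/(sX+1)=1/(1+1/(sX))$. So $\rho_{\mathrm{PSD}}^{(I)}=\E[\psi(V)]$ with $V=1/X$.

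\textbf{First inequality.} The function $\phi$ is strictly concave on $(0,\infty)$, because $\phi''(u)=-2/(u+1)^3<0$. Since $\E X<\infty$, Jensen gives $\E[\phi(sX)]\le\phi(s\,\E X)$, which is $\rho_{\mathrm{PSD}}^{(I)}\le\rhotilde$. By strict concavity, equality holds iff $X$ is almost surely constant.

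\textbf{Second inequality.} The function $\psi(v)=s/(s+v)$ is strictly convex on $(0,\infty)$, because $\psi''(v)=2s/(s+v)^3>0$. Since $\E V=\E[1/X]<\infty$, Jensen gives $\E[\psi(V)]\ge\psi(\E V)$, which is $\rho_{\mathrm{PSD}}^{(I)}\ge\wbar$. By strict convexity, equality holds iff $V$, equivalently $X$, is almost surely constant.

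Integrability is automatic: $\phi$ and $\psi$ are bounded in $[0,1]$, so every expectation is finite. Moreover $\E X$ and $\E V$ lie in $(0,\infty)$, so $\phi$ and $\psi$ are evaluated inside their domains.

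\textbf{Equality case.} This needs the standard strict-Jensen fact: for a strictly concave or convex function and an integrable variable, equality holds iff the variable is degenerate. For the ``if'' direction, substitute $X\equiv I$. Each functional then becomes $sI/(sI+1)$, which proves the last sentence of the statement and shows that all three coincide.

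The only delicate point I expect is confirming that Appendix~\ref{app:family} defines the three functionals exactly in these information forms, with the prior precision $1/\sigma^2_\theta$ inside the PSD integrand and the expectation taken under the same $G$ in all three. If $\sigma^2_\theta$ were instead the variance of $G$ and entered differently, the algebra would change only by the constant $s$, and the Jensen argument would carry through unchanged.
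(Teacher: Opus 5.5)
Your proof is correct and is essentially the paper's proof: two applications of strict Jensen's inequality, one in the information itself for the upper inequality and one in the inverse information for the lower inequality, with the equality case coming from strict convexity or concavity and the constant-information case from direct substitution. Your $\phi(u)=u/(u+1)$ and $\psi(v)=s/(s+v)$ are affine re-expressions of the paper's $J\mapsto (J+c)^{-1}$ and $x\mapsto x/(1+cx)$ with $c=1/\sigma^2_\theta$ (indeed $\phi(sJ)=1-c\,(J+c)^{-1}$ and $\psi(v)=1-c\,v/(1+cv)$), so the convexity facts you invoke are the same ones.
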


The proof, given in Appendix~\ref{app:family}, applies Jensen's inequality
twice. The map $u \mapsto \sigma^2/(\sigma^2+u)$ is convex in the error
summary while $x \mapsto 1/(x + c)$ is convex in information, and the two
inequalities enclose $\rho_{\mathrm{PSD}}^{(I)}$ between the other branches.
A numerical audit over $2\times 10^5$ random test designs found no
violations, and six worked scenarios, including a negative-$\wbar$ case
constructed to mirror a real dataset, are reported in
Appendix~\ref{app:family}. In practice the information curve is never flat,
so for essentially every real test the information-form EAP-type reliability
exceeds its WLE-type counterpart. That posterior-based scores attain higher
reliability than maximum-likelihood scores has been observed since early
comparisons of ability estimators \citep{kim_nicewander_1993}. What the
theorem adds is the two-sided ordering, its equality condition, and the
placement of the separation-type quantity at the bottom. Whether the ordering
carries over to the corpus estimators, which use two calibrations and
realized score variances, is an empirical question rather than a corollary.
It does: 97.1\% of units order as predicted, with 25 reversals of median
magnitude .002 (Section~\ref{sec:distribution}). An audit that reports a
single marginal reliability has therefore made a choice among branches,
whether or not it says so.

\subsection{The estimator portfolio}\label{sec:portfolio}

Guided by the family, we declare two co-primary operational estimands that
contrast common strict and lenient scoring paths; they are not universal
bounds on every reliability branch. The first is WLE separation reliability,
on branch 1 in its empirical version. It uses one calibration per dataset,
with scores and standard errors obtained by weighted likelihood estimation
\citep{warm_weighted_1989} and reliability computed as
$1 - \overline{\mathrm{SE}^2}/\Var(\hat\theta)$. Since it is unbounded below,
negative values are retained as information and are not clipped. The second
is EAP empirical reliability, on branch 2 in its empirical version. It uses
EAP scores and posterior standard deviations from the same calibration
family, with reliability computed as
$\Var(\hat\theta)/[\Var(\hat\theta) + \overline{\PSD^2}]$, the sample analog
of \cref{eq:rhopsd} \citep{bock_adaptive_1982, chalmers_mirt_2012}.

Around the co-primaries we report four further coefficients. Two are the
expected-form auxiliaries MRC and TRC \citep{andersson_large_2018}, which
import $g = N(0,1)$, and two are the model-free anchors $\alpha$ and
$\lambda_2$, computable without any IRT model. Estimating six coefficients on
one corpus turns estimator choice from a hidden researcher degree of freedom
into a measured quantity, in the spirit of multiverse and
specification-curve reporting \citep{steegen_increasing_2016,
simonsohn_specification_2020}.

\section{The corpus}\label{sec:corpus}

\input{floats/tab1_corpus}

Our data source is a frozen local snapshot (v1.2, acquired April 2026) of
the Item Response Warehouse \citep{domingue_introduction_2025}. Our unit of
analysis is the \emph{administration unit}, one response matrix from one
table--wave combination. Longitudinal tables contribute their most complete
wave, selected by a documented rule, because reliability is a property of an
administration and not of a panel. An indicator for repeated-measures origin
enters the meta-regression as a moderator (Appendix~\ref{app:corpus}).

To be eligible, a dataset needed at least three items, at least two response
categories, and at least 100 respondents. The warehouse inventory yields 933
eligible records; three datasets appear twice, because the snapshot combines
a canonical collection with a provisionally ingested one, and we keep one
copy of each, leaving 930 unique datasets. Two pre-declared caps then remove
formats outside the estimand's scope: 23 datasets whose items have more than
15 response categories, which are slider and visual-analog formats, and 17
datasets with more than 600 items, which are large and mostly adaptive item
banks. One further dataset was degenerate. Each exclusion carries a single
machine-readable reason code, so datasets satisfying more than one rule are
counted once. All 930 datasets were attempted, and 889 (95.6\%) produced
valid estimates with no unexplained failures (\cref{tab:corpus}).

Processing itself involved two steps. Item responses were densely recoded to
$0, 1, \dots$ per item, and item keys were aligned, meaning that items
loading against a scale's dominant direction were reverse-scored so that
sum-score and latent-scale coefficients refer to the same construct
orientation. The alignment rule was introduced after a keying defect was
discovered during the analysis, was frozen before the correction rerun, and
is documented with its validation in Appendix~\ref{app:keyalign}. Finally,
the corpus was computed in two waves: a first wave of 788 smaller datasets,
and a second wave of the 101 largest ones. Within the second wave, every
dataset with more than 5{,}000 respondents was analyzed on a fixed random
sample of 5{,}000 persons drawn with a deterministic per-dataset seed; this
applies to 89 of the 101. Reliability is a population functional, so
subsampling persons affects only estimation precision, and the sampling
error this induces, about .003--.006, is an order of magnitude below the
between-dataset heterogeneity of interest (Appendix~\ref{app:corpus}).

The resulting corpus spans 440 study clusters and contains 173 cognitive or
educational units, 201 affective or mental-health units, 178 opinion or
attitude units, and 166 personality, behavioral, developmental,
physical-health, or other units, with 171 unclassified. In the median unit
there are 909 respondents and 18 items, and 231 units are dichotomous
against 658 polytomous. As far as we are aware, no reliability computation
of comparable breadth exists under a common pipeline. Because eligibility is
rule-based over a public warehouse, the corpus approximates the datasets
psychology produces and shares, subject to the selection frame discussed in
Section~\ref{sec:limitations}.

\section{Estimation and inference}\label{sec:methods}

\subsection{Point estimation}\label{sec:points}

Each unit receives one primary calibration by marginal maximum likelihood in
\pkg{mirt} \citep{chalmers_mirt_2012}: the Rasch model for dichotomous units
\citep{rasch_probabilistic_1960} and the generalized partial credit model
\citep[GPCM;][]{muraki_generalized_1992, masters_partial_1982} for polytomous
units. From this calibration we obtain the EAP empirical reliability directly.
WLE separation reliability comes from an independent calibration of the same
matrix in \pkg{TAM} \citep{robitzsch_tam_2026}, whose weighted-likelihood
scoring implements \citet{warm_weighted_1989}.

The two co-primaries therefore rest on different calibrations and, for
polytomous units, on different models, since \pkg{TAM}'s default polytomous
specification is the partial credit model with a common slope whereas
\pkg{mirt}'s GPCM estimates a free slope per item. This follows from what the
two estimands are. Separation reliability belongs to the Rasch tradition,
which works in fixed-slope models, while posterior-based reliability is
routinely computed under free-slope calibrations, so the pair compares the
two quantities as they are actually produced. For interpretation the
consequence is that on polytomous units the WLE--EAP contrast mixes an
estimand difference with a model difference. Dichotomous units, where both
packages fit the same Rasch model, isolate the estimand contrast: their EAP
reliabilities agree across packages to a median absolute difference of .0004
($n = 224$), whereas on polytomous units the packages differ by a median of
.029 ($n = 655$, with 73\% exceeding .01). In Appendix~\ref{app:estimation}
we report a sensitivity analysis in which \pkg{TAM} is refitted with free
slopes on 30 stratified polytomous units. There the median change in WLE
separation reliability is $+.009$, but 23\% of units move by more than .05,
so the model choice is a real source of unit-level variation even though it
does not shift the corpus-level answer. For the anchors $\alpha$ and
$\lambda_2$ we use complete cases and report a coefficient when item-level
missingness is below 5\% and at least 100 complete cases remain. MRC and TRC
require the item-parameter covariance matrix and are computed for units with
at most 100 items \citep{andersson_irtreliability_2022}. Coverage by
estimator appears in \cref{tab:portfolio}, and per-unit seeds, convergence
settings, and the full processing record are given in
Appendix~\ref{app:estimation}.

One property of scoring from a fitted calibration should be recorded here.
Posterior standard deviations, and hence EAP empirical reliability, condition
on the estimated item parameters, and ignoring that layer of uncertainty is
known to overstate the reliability of IRT scale scores
\citep{cho_multilevel_2019}. Our bootstrap propagates the omitted layer into
the EAP standard errors, since each replicate re-estimates the full model
(Section~\ref{sec:ses}), but it does not remove the bias from the point
estimate, which therefore remains slightly upward. Like the item-facet gap of
Section~\ref{sec:itemfacet}, this bias runs against the headline, so that the
true shares below threshold are, if anything, larger than we report.

\subsection{Standard errors}\label{sec:ses}

Not every coefficient carries a standard error, so one accounting rule is
used throughout the paper: pooled analyses use, for each estimator, the
subset of units with a usable standard error. \Cref{tab:portfolio} reports
that count for every coefficient; under the co-primaries it is 877 for EAP
and 867 for WLE of the 889 units. The pooled models apply one further
restriction, excluding as numerically degenerate the units whose separation
reliability falls below $-1$. Two of those carry a WLE standard error, so
the WLE deconvolution and meta-regression are fitted on 865 units, which is
the count reported with those models. Standard errors come from a hybrid strategy, with the
person bootstrap for the co-primaries. For EAP reliability each replicate
re-estimates the full model. The cost-aware rule has an attempted floor of
12 replicates, but failed refits reduce the successful count: its median is
100, its lower quartile 40, 19.9\% of units have fewer than 30 successful
replicates, and 11 have at most three and therefore no EAP standard error.
In Appendix~\ref{app:ses} we show that the resulting Monte Carlo error is
small after aggregation. For WLE separation reliability we hold the
calibration fixed and resample the realized $(\hat\theta_i, \mathrm{SE}_i)$
pairs, because refitting \pkg{TAM} on every draw is computationally
prohibitive at corpus scale. Our reported WLE uncertainty is therefore
conditional on the calibration \citep[on such uncertainty layers,
see][]{yang_characterizing_2012}, and we measured the shortfall directly.
Full-refit \pkg{TAM} bootstraps were attempted on 30 cross-validation units;
29 yield usable paired estimates and give a median full-to-pair
standard-error ratio of 1.06 (interquartile range 0.95--1.20), and
propagating the implied variance factor of 1.12 through the deconvolution
changes no true-share estimate by more than 0.1 percentage points
(Appendix~\ref{app:ses}). For the anchors we use the analytic
multinomial-delta standard errors of \citet{ark_standard_2025}, and for MRC
and TRC the item-parameter delta method of \citet{andersson_large_2018}.
Median standard errors are .006 for EAP and .010 for WLE on the reliability
scale.

Since the deconvolution below treats these standard errors as known, we
cross-validated the analytic ones against the person bootstrap on a
cheap-refit sample of 30 attempted units, with 29 usable for several
summaries (Appendix~\ref{app:ses}). This design underrepresents large units
and excludes slow refits by construction. The van der Ark delta method is
essentially exact, with median bootstrap-to-analytic ratios of 1.006 for
$\alpha$ and 1.003 for $\lambda_2$ and log-standard-error correlations of
.998. By contrast, the item-parameter delta method is a mild lower bound,
with median ratios of 1.02 for MRC and 1.15 for TRC. Note that this gap does
not shrink with sample size, which identifies its source as uncertainty the
delta method conditions away, namely model-specification propagation and the
fixed $N(0,1)$ prior, rather than as small-sample bias. All four medians
fall inside the pre-specified acceptance band $[0.80, 1.25]$, but TRC clears
it only on the median: 38\% of its units sit in band, and its
log-standard-error correlation with the bootstrap is .64, the weakest
agreement in the portfolio. We therefore regard TRC's standard error as the
least trustworthy in the portfolio, use bootstrap standard errors wherever
both exist, and treat the delta-method auxiliaries as mildly
anticonservative.

\subsection{The distribution of true reliability}\label{sec:deconv}

Observed coefficients scatter around true values, so the observed
distribution overstates the dispersion of the true one. To recover the
distribution of true reliability we work on the transformed scale
$L = \ln(1-r)$, which removes the ceiling at $r = 1$, accommodates negative
WLE values, and stabilizes variances \citep{bonett_sample_2002,
bonett_varying_2010}. Comparisons of transformations in the reliability
generalization literature support this choice
\citep{lopez-ibanez_reliability_2024}. Delta-method standard errors transform
as $\mathrm{SE}_L = \mathrm{SE}_r/(1-r)$. For unit $i$ in study $s$ we fit
\begin{equation}\label{eq:decon}
\hat L_{si} \;=\; \mu + u_s + u_{si} + \varepsilon_{si}, \qquad u_s \sim
N(0, \tau^2_{\mathrm{study}}), \quad u_{si} \sim N(0, \tau^2_{\mathrm{unit}}),
\quad \varepsilon_{si} \sim N(0, v_{si}),
\end{equation}
a three-level random-effects model \citep{konstantopoulos_fixed_2011,
vandennoortgate_threelevel_2013} estimated by restricted maximum likelihood
in \pkg{metafor} \citep{viechtbauer_conducting_2010}. The estimated bootstrap
variances $v_{si}$ are treated as known, and the within-study sampling
correlation is handled by the same working model as the meta-regression
below ($\rho = 0.6$; \citealt{pustejovsky_meta-analysis_2022}), so that
every model in the paper makes one dependence assumption rather than two
(Appendix~\ref{app:deconv-rho}). Three-level specifications of this kind
have precedent in the meta-analysis of model-based reliability coefficients
\citep{scherer_tutorial_2020}. Because $v_{si}$ is supplied rather than
estimated jointly, restricted maximum likelihood can separate the modeled
sampling-noise layer from heterogeneity. This is the classical
known-variance random-effects decomposition, which acts as a parametric
empirical-Bayes deconvolution \citep[cf.][]{efron_deconvolution_2016}. Under
this model the implied true share below a threshold $c$ is
$1 - \Phi\{[\ln(1-c) - \hat\mu]/\hat\tau_{\mathrm{total}}\}$, and the true
density back-transforms in closed form (Appendix~\ref{app:deconv}).

The population to which these parameters refer requires comment. In the
terms of \citet{bonett_varying_2010}, a random-effects model is a
random-coefficient model whose parameters refer to a superpopulation, and
our corpus was not sampled from one but is a census of the eligible
warehouse. We therefore also report the share below threshold among these
889 units computed from Gaussian empirical-Bayes unit predictions. This is a
diagnostic, not Bonett's varying-coefficient estimator: it avoids a
normal-tail formula but retains the fitted Gaussian hierarchy and shrinkage.
At the .80 threshold it is within about two percentage points of the normal
tail (Appendix~\ref{app:deconv-target}). Where a superpopulation reading is
intended, it should be understood as instruments of the kind that reach a
public warehouse.

\subsection{Meta-regression}\label{sec:metareg-m}

To ask which design features carry the heterogeneity, we extend
\cref{eq:decon} with moderators: the logarithm of the number of items, the
logarithm of the number of respondents (log items and log persons below),
the number of response categories, construct type, and the
repeated-measures indicator. This set is deliberately confirmatory and was
fixed in advance. Quantities that are mechanically posterior to the
calibration, such as a unidimensionality index computed from the same fit,
are excluded as mediators and examined separately
(Appendix~\ref{app:metareg}). Units within a study often share respondents,
for instance subscales of one battery, so their sampling errors are
correlated with unknown correlation. We impute a working $\rho = 0.6$ in the
correlated-and-hierarchical-effects structure of
\citet{pustejovsky_meta-analysis_2022} and base all inference on robust
variance estimation \citep{hedges_robust_2010} with CR2 cluster-robust
standard errors and Satterthwaite degrees of freedom
\citep{satterthwaite_approximate_1946, pustejovsky_small_2018,
tipton_small-sample_2015}. Under independent study clusters, adequate
cluster count, and the usual CR2 regularity conditions, this reduces
sensitivity of inference to the working $\rho$; it is not an assumption-free
guarantee. Estimation uses fully inverse-variance weights via
\pkg{metafor}'s \code{rma.mv}, as \citet{pustejovsky_meta-analysis_2022}
recommend. Across $\rho \in \{0.3, 0.6, 0.9\}$ each focal coefficient shown
in Appendix~\ref{app:metareg} moves by at most .001; this is an empirical
sensitivity result for those terms, not a general validity guarantee.

\subsection{Robustness analyses}\label{sec:robustness}

Three features of the design double as robustness analyses and are reported
with the results. First, the six-estimator portfolio is a sensitivity
analysis over estimand choice, re-estimated on the common subset of units
carrying standard errors for all six estimators
(Section~\ref{sec:estimator-dependence}). Second, the two-wave completion of
the corpus gives stratum-level replications of the headline
(Section~\ref{sec:composition}). Finally, a residual-covariate audit adds
every remaining harvestable dataset property, including sample frame,
sparsity, provenance grade, model family, and unidimensionality, to test
whether the confirmatory conclusions survive (Appendix~\ref{app:metareg}).

\section{Results}\label{sec:results}

\subsection{The distribution of computed reliability}\label{sec:distribution}

\begin{figure}[t]
\centering
\includegraphics[width=\textwidth]{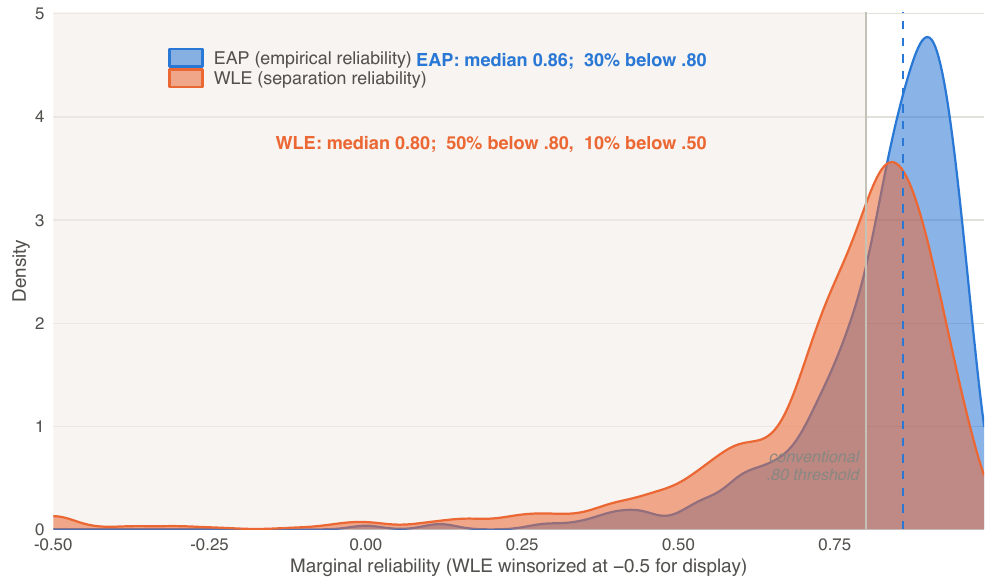}
\caption{The distribution of marginal reliability across 889 administration
units under the two co-primary estimands. Dashed vertical lines mark medians,
the gray line marks the conventional .80 threshold, and the shaded region is
the sub-threshold zone. Under EAP empirical reliability (blue) the median is
.86 and 30\% of units fall below .80. Under WLE separation reliability
(orange) the median is .80, 50\% fall below .80, and 10\% fall below .50, all
observed shares. WLE values below $-0.5$ (9 units, of which 4 are numerically
degenerate below $-1$) are winsorized for display only; all analyses use the
raw values.}
\label{fig:distribution}
\end{figure}

\input{floats/tab2_portfolio}

In \cref{fig:distribution} we show the distribution under the two co-primary
estimands, and \cref{tab:portfolio} summarizes the full portfolio. Under EAP
empirical reliability, the lenient co-primary and the quantity that
\pkg{mirt} users most often report, the median dataset reaches .859, which
appears comfortable. Even here, however, 30\% of datasets fall below .80 and
14\% below .70. Under WLE separation reliability, the strict co-primary and
the quantity a user of unshrunken scores experiences, the median drops to
.801, the sub-.80 share rises to 50\%, and 10\% of datasets fall below .50,
the region in which individual scores carry more noise than signal.
Twenty-one units have negative separation reliability, their realized score
variance being smaller than their average squared standard error, which
happens when an instrument is administered to a population it cannot
distinguish \citep{adams_reliability_2005}.

That last count requires a comment, because an earlier version of this
analysis reported 48 such units. The difference is item-key alignment.
Before items were aligned to a common direction, 27 further units returned
negative separation reliabilities that were artifacts of mixed keying and
not statements about the instruments; Appendix~\ref{app:keyalign} documents
the correction and measures its effect estimator by estimator. The Rosenberg
Self-Esteem Scale is the clearest example. Five of its ten items are
reverse-worded, and on the unaligned matrix its separation reliability was
negative, whereas alignment restored it to .89. Its EAP reliability of .92
was never touched by the defect. Negative separation reliability is a real
and interpretable phenomenon, but it is rarer than an unaligned pipeline
suggests, and a negative value should prompt a keying check before it
prompts a substantive conclusion.

The two co-primaries usually disagree in one direction, as the ordering of
Section~\ref{sec:order} leads one to expect: in 97.1\% of units EAP exceeds
WLE, the 25 reversals have median magnitude .002, and the median within-unit
gap is .045. The implemented gap is a diagnostic rather than a theorem,
since beyond the estimand contrast it also contains calibration and package
differences, together with slope-model differences on polytomous units.

\subsection{Estimation noise and true heterogeneity}\label{sec:real}

\begin{figure}[t]
\centering
\includegraphics[width=\textwidth]{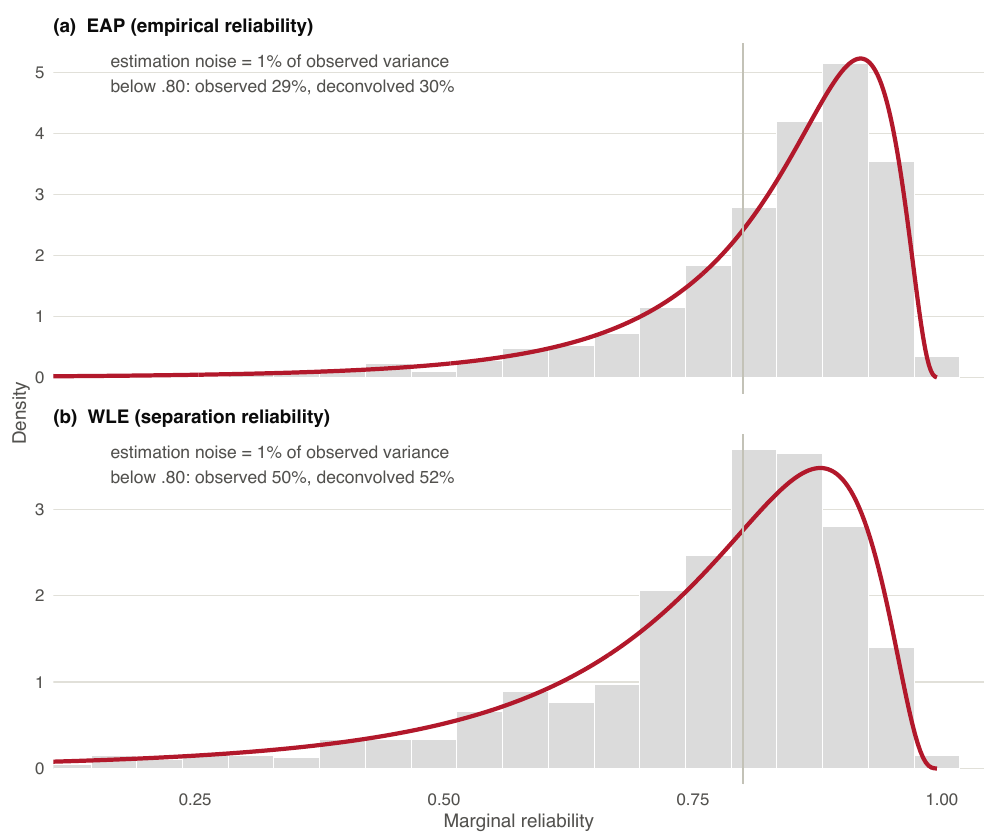}
\caption{Observed and deconvolved distributions of reliability. Gray bars
show observed point estimates, and the red curve is the true distribution
implied by the three-level model of \cref{eq:decon}, which separates
estimation noise (estimated bootstrap variances treated as known) from
between-dataset heterogeneity. Panel (a): EAP empirical reliability
($n = 877$ units with standard errors). Panel (b): WLE separation reliability ($n = 865$, the 867 units with a
standard error less two values below $-1$ excluded as numerically
degenerate). Estimation noise accounts for about 1\% of the observed
variance in both panels. The display range is .15--1; 4 EAP and 23
WLE analysis values below .15 are omitted from the plot only and retained in
every calculation.}
\label{fig:deconv}
\end{figure}

One objection to \cref{fig:distribution} is that the low tail reflects noisy
estimation, with small samples scattering below the threshold by chance. The
deconvolution rejects this. On the $L$ scale, average unit-level sampling
variance is about 1\% of the observed variance, 1.0\% for EAP and 0.8\% for
WLE, so removing it barely moves the distribution. Under deconvolution the
true sub-.80 share is 30.3\% for EAP and 51.8\% for WLE, against observed
shares of 29.1\% and 49.7\%, and the true sub-.50 share under WLE is 11.1\%.
Nor is the qualitative result an artifact of the parametric tail. Gaussian
empirical-Bayes unit predictions give shares of 28.7\% and 49.7\%; these
avoid a closed-form normal tail but retain Gaussian hierarchy and shrinkage.
A flexible spline $g$-model agrees with the normal model to 0.1 points for
EAP and 1.3 points for WLE at the .80 threshold (Appendix~\ref{app:gmodel}).
Heterogeneity is far larger than noise: $\hat\tau_{\mathrm{total}} = .72$ on
the $L$ scale for both estimands, against median unit standard errors an
order of magnitude smaller. Interestingly, more of the heterogeneity lies
within studies than between them ($\tau^2_{\mathrm{unit}} >
\tau^2_{\mathrm{study}}$ for both estimands; Appendix~\ref{app:deconv}).
Reliability is thus a property of the individual scale administration and
not of the laboratory or the paper, which is why citing a validation article
cannot substitute for computing reliability in the sample at hand
\citep{thompson_psychometrics_2000}.

\subsection{Estimator dependence}\label{sec:estimator-dependence}

\begin{figure}[t]
\centering
\includegraphics[width=\textwidth]{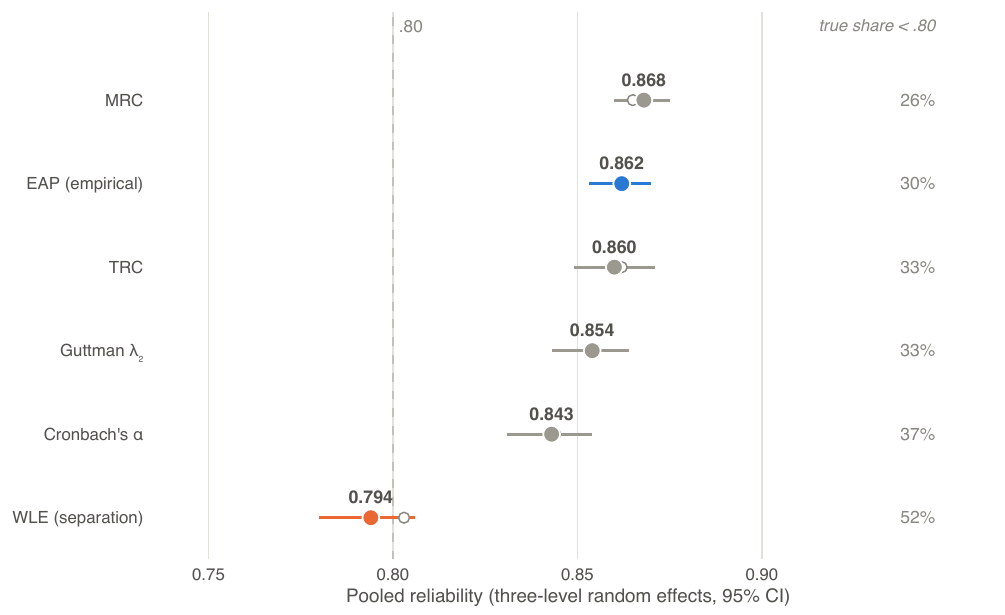}
\caption{Estimator dependence of the corpus-level answer. Pooled three-level
estimates with 95\% confidence intervals are shown for six estimators of the
same corpus, ordered from bottom to top. Filled points use each estimator's
maximal sample and hollow points the common subset carrying standard errors
for all six ($n = 621$--623), where the arrangement is identical. The right
margin gives the deconvolved true share below .80 under each estimator. The
portfolio spans .794--.868 in pooled reliability and 26--52\% in sub-.80
share. The co-primary order (WLE below EAP) is empirically consistent with
the information-functional order but is not its finite-sample corollary, and
$\alpha \le \lambda_2$ follows \citet{guttman_basis_1945}, whereas the
intermediate placement of the anchors and the expected-form auxiliaries is
an empirical regularity of this corpus.}
\label{fig:spec}
\end{figure}

In \cref{fig:spec} we summarize the estimator comparison. Six estimators,
each publishable as the marginal reliability of these data, return pooled
values from .794 for WLE separation to .868 for MRC, and true sub-.80 shares
from 26\% to 52\%. In this corpus the $\wbar$-path estimator sits lowest and
the posterior-path pair highest, an empirical arrangement consistent with
but not entailed by \cref{thm:order}. Within the posterior path the
empirical and expected versions, EAP and MRC, nearly coincide, because
averaging over the realized sample or over $N(0,1)$ matters little when
calibration samples are large. Between them fall the model-free anchors and
the sum-score TRC, with $\alpha \le \lambda_2$ as \citet{guttman_basis_1945}
requires. Note that the theorem does not govern the anchors' placement, so
that part of the arrangement is an empirical regularity and not a necessity.
On the common subset of units carrying all six estimators ($n = 621$--623)
the arrangement is identical, so coverage composition explains none of the
spread.

At the corpus level the choice of estimator therefore accounts for about
seven points of pooled reliability and twenty-six percentage points of the
sub-.80 share, which covers most of the distance between what
reviewers treat as acceptable and what they treat as questionable. In
practice the choice is rarely stated and is usually made by software
default, with \pkg{mirt} users inheriting the posterior path and
Rasch-tradition users the separation path \citep{soland_how_2024,
adams_reliability_2005}. Downstream analyses inherit whatever the default
produces. Parallel work on the warehouse shows that scoring decisions alone,
holding students and items fixed, can move value-added rankings of teachers
and schools by around twenty percentile points
\citep{gilbert_sensitivity_2026}.

Part of an apparent estimator spread can also be manufactured by data
processing. Before the keying correction described in
Section~\ref{sec:corpus}, the portfolio span in \cref{fig:spec} was .098
rather than .075, because mixed keying biased the coefficients computed on
raw sum scores or under fixed slopes, while the free-slope estimators
largely absorbed the reversed items. Roughly a quarter of what would have
been read as estimand disagreement was therefore processing artifact. The
estimator-by-estimator account of that correction is given in
Appendix~\ref{app:keyalign}.

\subsection{Moderators of reliability and of the estimand gap}
\label{sec:moderators}

\input{floats/tab3_metareg}

\begin{figure}[t]
\centering
\includegraphics[width=\textwidth]{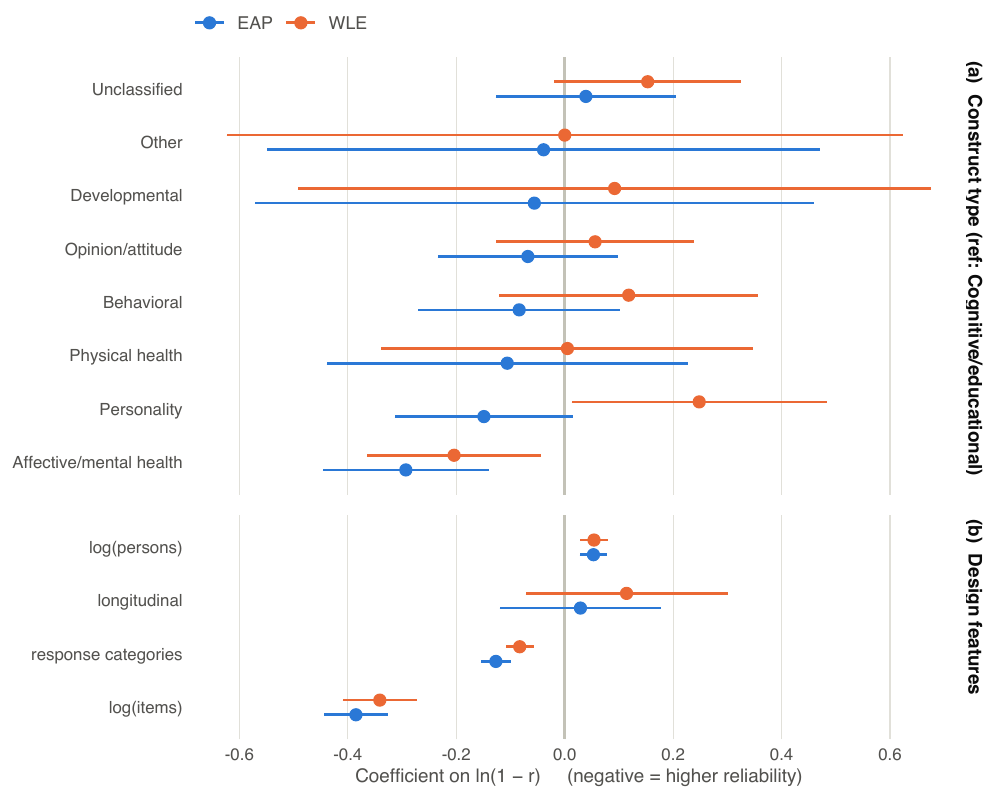}
\caption{Meta-regression coefficients on $\ln(1-r)$ with CR2 cluster-robust
95\% normal-Wald intervals, for the EAP (blue) and WLE (orange) estimands.
Negative coefficients indicate higher reliability. Panel (a): construct-type
contrasts against the cognitive/educational reference; the formal
within-unit test of moderation of the implemented gap appears in
Section~\ref{sec:moderators}. Panel (b): design features. Longer tests and
more response categories raise reliability under both estimands, while
larger samples show slightly lower reliability, which
Section~\ref{sec:composition} identifies as a composition effect.}
\label{fig:metareg}
\end{figure}

\Cref{tab:metareg} and \cref{fig:metareg} summarize the meta-regression.
Three design effects hold under both estimands. Test length dominates, with
$\hat\beta_{\ln J} = -.385$ for EAP and $-.341$ for WLE, both $p < .001$, so
that doubling test length multiplies unreliability $(1-r)$ by about
0.77--0.79. For comparison, the Spearman--Brown prophecy
\citep{spearman_correlation_1910, brown_some_1910} predicts a multiplier of
$1/(1+r) \approx 0.56$ at $r = .80$. Lengthening a test in practice
therefore yields roughly half the gain that the textbook formula predicts.
Since the comparison is between instruments rather than within them, it does
not contradict the Spearman--Brown law, which assumes parallel added items.
Rather, it indicates that real added items bring heterogeneous content, and
it provides a descriptive benchmark for what lengthening accompanies across
instruments; it does not bound a within-test intervention. More response
categories also help, with about 12\% less unreliability per additional
category under EAP, and the repeated-measures indicator is null. Larger
samples show slightly lower reliability under both estimands, but within the
small-sample stratum the slope vanishes, which identifies the pooled
coefficient as a between-stratum composition contrast, a matter of who
collects very large samples, rather than as a within-stratum gradient
(Appendix~\ref{app:metareg}; Section~\ref{sec:composition}).

Construct-type moderation, by contrast, differs across the two co-primary
specifications, and this requires a direct test. Comparing significance
patterns across the two columns of \cref{tab:metareg} would not provide one,
because the difference between significant and non-significant is not itself
significant \citep{gelman_difference_2006}. For each unit carrying both
co-primaries we therefore form the within-unit difference
$\Delta_i = L^{\mathrm{WLE}}_i - L^{\mathrm{EAP}}_i$ and regress it on the
same moderators with CR2 study-clustered errors ($n = 875$ units in 430
study clusters; Appendix~\ref{app:metareg}). Differencing removes additive
study and unit components shared exactly by the two outputs, but not the
package, calibration, or slope-model components, so these coefficients
describe moderation of the implemented WLE--EAP gap rather than of the
estimand contrast alone; we return to this qualification in the limitations.
The construct factor is jointly significant (Hotelling--Zhang test,
\citealp{tipton_small-sample_2015}; $F(8, 25.3) = 2.72$, $p = .026$), and
the personality contrast dominates and survives Holm correction
\citep{holm_simple_1979} over the eight construct contrasts
($\hat\beta_\Delta = +.323$, SE $= .085$, $p_{\mathrm{Holm}} = .004$). In
other words, the WLE-to-EAP unreliability ratio is about 38\% larger for
personality scales than for cognitive tests. Behavioral ($+.186$,
$p_{\mathrm{Holm}} = .030$) and opinion--attitude ($+.133$,
$p_{\mathrm{Holm}} = .021$) contrasts also survive correction, while the
affective contrast does not ($p_{\mathrm{Holm}} = .33$). In
Appendix~\ref{app:metareg} the residual covariate audit suggests a mechanism
for each side. On the EAP scale the affective advantage in
\cref{tab:metareg} ($-.293$) is fully mediated by unidimensionality and
model family, since affective screeners are short, strongly one-factor
Likert instruments and the contrast vanishes once those two variables enter.
The WLE personality penalty is not explained away and in fact strengthens
slightly under the full audit block ($+.248$ to $+.300$, $p = .008$), which
is consistent with personality inventories concentrating information
narrowly, so that the unbounded $\E[1/\Jinfo]$ integral weighs their flat
tails heavily. In summary, construct differences in the level of reliability
are modest, whereas the disagreement between the implementations is
construct-structured. In all, the confirmatory moderators explain 35\% of
total heterogeneity under EAP and 19\% under WLE, so most of what makes a
particular administration reliable is not predictable from design
descriptors and has to be measured.

\subsection{Composition by completion wave}\label{sec:composition}

\begin{figure}[t]
\centering
\includegraphics[width=\textwidth]{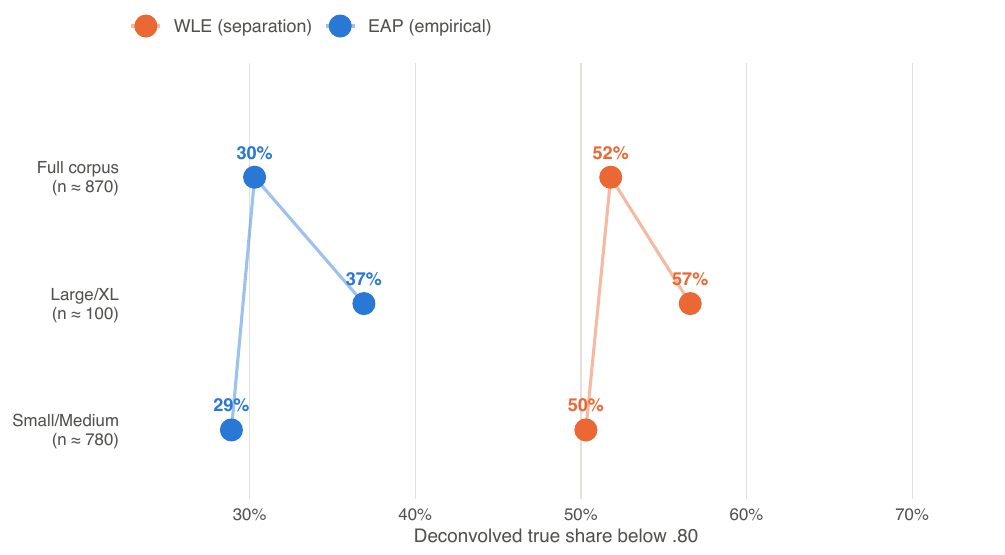}
\caption{Composition of the headline. Deconvolved true share below .80 is
shown for the first completion wave (labeled Small/Medium), the second wave
(labeled Large/XL: the 101 largest datasets, of which 94 and 100 carry
standard errors under EAP and WLE), and the full corpus, under both
co-primary estimands. The second wave, dominated by internet panels and
large surveys carrying short embedded scales, is less reliable than the
first, so completing the corpus raised the headline shares.}
\label{fig:composition}
\end{figure}

A natural objection to early versions of this corpus was compositional. Its
first wave under-represented very large datasets, and professionally built
large-scale tests might be expected to be highly reliable, pulling the
distribution upward. In \cref{fig:composition} the completed corpus answers
the objection empirically. Under both estimands the second wave, the 101
largest datasets described in Section~\ref{sec:corpus}, is less reliable
than the first, with true sub-.80 shares of 36.9\% against 28.9\% for EAP
and 56.6\% against 50.3\% for WLE. The reason is compositional: in a public
warehouse, large datasets are typically internet panels and national surveys
carrying short embedded scales rather than flagship achievement tests.
Including the second wave moved the full-corpus shares from 28.9\% to 30.3\%
for EAP and from 50.3\% to 51.8\% for WLE. The direction of this composition
effect also explains the positive log-persons coefficient in
\cref{tab:metareg}.

\section{Discussion}\label{sec:discussion}

\subsection{Interpreting the two co-primary estimands}\label{sec:reading}

Three empirical statements now have corpus-level support. Sub-threshold
reliability is not an exceptional case: under the strict co-primary estimand
it is the majority outcome, and under the lenient one it covers three
datasets in ten. These differences are real and not estimation scatter, and
they lie mostly within studies, so per-instrument validation citations
cannot substitute for computing reliability in the sample at hand. Finally,
the answer to the question of how reliable a set of scores is depends on the
estimand, by mathematical necessity for the information functionals
(\cref{thm:order}) and by about seven points of pooled reliability in this
corpus, and the dependence extends to moderation, since
the gap between the estimands is itself construct-structured
(Section~\ref{sec:moderators}).

Both co-primaries are best read as complements answering different questions
rather than as competitors. EAP empirical reliability describes the
precision of shrunken scores, which is the relevant quantity when downstream
use also shrinks, as in multilevel modeling or empirical-Bayes ranking. WLE
separation reliability describes unshrunken scores, the relevant quantity
when individual point estimates are used as they stand, as in cut-score
decisions or secondary regressions on observed scores. As a limiting case,
the 21 negative-WLE units are populations their instruments cannot order.
When the two numbers diverge sharply, the divergence itself is informative,
since it indicates that the test's information is concentrated in a narrow
region relative to where its population sits.

\subsection{Implications for reporting practice}\label{sec:implications}

The results motivate three inexpensive changes to routine practice and one
to software defaults.

First, name the estimand. A statement such as ``marginal reliability was
.86'' is under-specified in the same way that an unlabeled effect size would
be. One clause suffices: which branch (separation or posterior; latent or
sum score) and which version (empirical or expected).

Second, attach uncertainty. Reliability coefficients are estimates, and
standard errors are available at modest cost for much of this paper's
portfolio: analytic for $\alpha$ and $\lambda_2$ \citep{ark_standard_2025}
and for MRC and TRC \citep{andersson_large_2018}, and one bootstrap loop for
the rest. A coefficient of .78 with a standard error of .04 licenses
different claims than the same coefficient with a standard error of .005.
This recommendation is not new in spirit. Cronbach's retrospective judgment
was that a coefficient is a crude device and that error information, in his
case the standard error of measurement for scores, is the most important
single thing to report about an instrument \citep{cronbach_my_2004}.

Third, report a strict and a lenient operational coefficient together. A
two-number report, WLE separation alongside EAP empirical reliability,
compares two common scoring paths at the cost of one line and converts the
silent estimator degree of freedom into information. A sharp divergence is a
diagnostic that can reflect targeting, calibration, model, and scoring
differences and should be investigated rather than dismissed.

Fourth, fix the defaults. Most applied users report whatever their software
prints, and the corpus-level spread documented here measures what that
practice currently costs \citep{soland_how_2024}. Package authors could
print the pair, or at least label the branch, at negligible cost.

\subsection{Relation to the literature on reported coefficients}
\label{sec:relation}

Relative to reliability generalization \citep{vachahaase_reliability_1998,
sanchezmeca_recommended_2013}, this audit changes three things. It computes
coefficients instead of harvesting them, which removes reporting selection.
It spans instruments rather than replications of one, and it attaches
unit-level uncertainty, which is what enables the deconvolution. In return
we accept a different selection frame, datasets that reach a public
warehouse, which we discuss under limitations.

We can quantify the computed-versus-reported difference, because
\citet{greco_meta-analysis_2018} report, construct family by construct
family, the share of reported alpha values below each threshold. Their
corpus and ours overlap in three construct families: individual differences,
work attitudes, and behaviors. Across those families their weighted shares
are 6.8\% below .70 and 29.0\% below .80, whereas ours, computed on the same
three construct types in the warehouse ($n = 236$), are 14.8\% and 40.3\%.
Two readings of the gap are available and the data do not fully separate
them. One is composition, since Greco's corpus is built from established
management and organizational instruments while ours contains whatever a
public warehouse holds. The other is reporting selection. Test length, the
strongest moderator in this paper, does not account for the gap, because
restricting our units to at least ten items (median 21, longer than most of
the scales in their corpus) leaves gaps of 7.2 and 5.0 points, and adding a
sample-size floor widens rather than narrows them. We therefore read the
residual gap as a descriptive composition-plus-selection contrast, and not
as a bound or a clean estimate of publication bias. Its direction is
nevertheless consistent and its magnitude large enough to matter, in that a
reader calibrating expectations from the published literature will expect
better measurement than a warehouse of raw data delivers. Note that before
the keying correction this gap measured 26.5 points at the .70 threshold,
and the correction removed 69.8\% of it (Appendix~\ref{app:keyalign}); the
removed portion was our own processing artifact, which is one reason
pipeline auditing should precede comparisons with published coefficients.

A sharper test suggested by \citet{hussey_aberrant_2025} is out of reach.
They detect reporting distortion as excess mass at threshold values, and
computed coefficients, which pass through no reporting filter, should show
none. Our corpus is roughly two orders of magnitude too small for that
comparison (Section~\ref{sec:limitations}, item 10). Relative to the
reliability-paradox literature \citep{hedge_reliability_2018,
rouder_hierarchical_2024}, which explains why well-replicated experimental
effects yield unreliable difference scores, our contribution is breadth.
Their warning generalizes far beyond cognitive-control tasks, and the corpus
supplies the base rates. Relative to the measurement meta-science agenda
\citep{flake_measurement_2020, parsons_psychological_2019}, the corpus
supplies what exhortation lacks, namely a defensible empirical prior on the
reliability to expect when a paper reports none.

\subsection{Limitations}\label{sec:limitations}

Ten limitations bound the claims. (1) \emph{Selection}. Our corpus contains
only datasets whose authors could and did share them. Which data become
publicly analyzable is governed by discretionary disclosure decisions rather
than by sampling \citep[cf.][]{acciai_estimating_2023}, and high-stakes
operational testing programs are under-represented, while the composition
analysis shows that sample size is no proxy for that stratum. What we
describe here is the distribution of shareable psychology, which is also the
psychology most secondary analysis touches. (2) \emph{Model}. Primary
estimates assume unidimensional Rasch or GPCM calibrations. This is the
field's default scoring model, and cross-dataset comparisons on the
warehouse suggest that added dimensions buy little in predictive terms for
large dichotomous datasets \citep{liu_comparing_2026}, but a
multidimensional reanalysis would define different estimands rather than
correct these. (3) \emph{Package and model confounding}. The two
co-primaries come from different packages and, for polytomous units, fixed-
versus free-slope models. Their delta regression therefore concerns the
implemented gap, not the estimand contrast alone. (4) \emph{WLE uncertainty}
is conditional on the calibration (Section~\ref{sec:ses}). Our measured
shortfall is a median standard-error factor of 1.06, and propagating it
changes no headline by more than 0.1 percentage points. (5) \emph{Sampling}
caps precision at the 5{,}000-person level for the 89 sampled large
datasets, which is small next to the heterogeneity. (6) \emph{Construct
labels} are warehouse metadata with 19\% of units unclassified, and we model
the unclassified category explicitly rather than dropping it. (7) \emph{The
item facet is fixed} (Section~\ref{sec:itemfacet}), so every coefficient
conditions on the realized item set. Item-random generalizability may be
lower under a defensible content domain, but the direction and magnitude are
not identified here. (8) \emph{Item-key alignment} is a post hoc corrective
rule frozen before the correction rerun. Known positive controls support it,
but corpus-wide error rates and 11 cognitive/educational acceptances require
an item-level audit (Appendix~\ref{app:keyalign}). (9) \emph{Tail-share
uncertainty}. Deconvolved shares are point estimates conditional on
estimated unit variances and fitted heterogeneity; they do not propagate
uncertainty in the variance components or mixing distribution.
(10) \emph{Threshold distortion} cannot be tested here.
\Citet{hussey_aberrant_2025} document excess reported alphas at .70, .80,
and .90, and the natural comparison, whether computed coefficients show no
such excess, fails for power. With 703--872 values per estimator spread over
60 bins, no permutation test approaches significance (smallest $p = .29$),
and at the .70 threshold the smallest caliper ratio our sample sizes can
distinguish from unity is 2.6--5.5, well above the 1.60--1.71 they observe.
We report the attempt so that the required scale is on record.

\subsection{Concluding remarks}\label{sec:conclusion}

We attempted every eligible dataset in a frozen public warehouse and
computed a portfolio of reliability coefficients with unit-level uncertainty
for the 889 that could be analyzed, then used the resulting corpus to
estimate the distribution of true reliability and its dependence on the
estimand. The distribution is wide, its low tail is genuine, and its
location depends on a choice that most reports leave to software defaults.
None of the remedies suggested here is expensive. A report that names its
estimand, attaches a standard error, and gives a strict coefficient beside a
lenient one costs one additional sentence, and the corpus indicates how much
is left unknown when that sentence is omitted. This same infrastructure now
allows the other assumption on which these coefficients rest, the normality
of the latent distribution, to be examined at the same scale, and that is
the subject of a companion investigation.

%% file: floats/tab1_corpus.tex
\begin{table}[t]
\centering
\caption{The analysis corpus: 889 administration units from the Item Response Warehouse local v1.2 freeze. The warehouse inventory contained 933 eligible records for 930 unique datasets. Excluded were 23 units whose items have more than 15 response categories (slider and visual-analog formats), 17 units with more than 600 items (large, mostly adaptive item banks), and 1 degenerate unit; each exclusion carries one machine-readable reason code. Rows are processing tiers by stored size. The corpus was completed in two waves, the second wave comprising the 101 largest datasets (75 XL-tier and 26 medium-tier); the 89 of these with more than 5{,}000 respondents (14 medium-tier) were analyzed on a fixed 5{,}000-person random sample (Appendix~\ref{app:corpus}). Persons and Items are median (10th--90th percentile). The final column is the share with standard errors for both co-primary estimators.}
\label{tab:corpus}
\footnotesize
\setlength{\tabcolsep}{5pt}
\resizebox{\textwidth}{!}{%
\begin{tabular}{lrllrr}
\toprule
Size stratum & Units & Persons & Items & Categories & Both SEs \\
\midrule
Small (processing tier) & 776 & 711 (198--3,428) & 16 (5--59) & 5 & 97\% \\
Medium (processing tier) & 38 & 9,115 (1,832--32,613) & 64 (22--222) & 5 & 87\% \\
XL ($>$50{,}000 persons; sampled) & 75 & 1,000,000 (98,531--1,000,000) & 52 (8--133) & 2 & 93\% \\
\midrule
All units & 889 & \multicolumn{4}{l}{745 grade A + 144 grade B} \\
\bottomrule
\end{tabular}%
}
\end{table}

%% file: floats/tab2_portfolio.tex
\begin{table}[t]
\centering
\caption{Marginal reliability across six estimators of the same corpus, ordered by median. The observed co-primary order (WLE below EAP) is empirically consistent with the information-functional comparison in \cref{thm:order}, but is not its finite-sample corollary; $\alpha \le \lambda_2$ follows Guttman (1945). The intermediate placement of the anchors and expected-form auxiliaries is an empirical regularity of this corpus (\cref{sec:estimator-dependence}). Bold rows are the co-primary estimands. Coverage ($n$) differs by design: $\alpha$/$\lambda_2$ require near-complete data; MRC/TRC skip units with more than 100 items (Appendix~\ref{app:estimation}). Pooled reliability is the three-level random-effects estimate on $L=\ln(1-r)$ (\cref{sec:deconv}), back-transformed, with 95\% confidence interval.}
\label{tab:portfolio}
\footnotesize
\setlength{\tabcolsep}{3pt}
\resizebox{\textwidth}{!}{%
\begin{tabular}{llrrrcrrrc}
\toprule
 & & & & & & \multicolumn{3}{c}{Share below} & \\
\cmidrule(lr){7-9}
Estimator & Family & Point $n$ & SE $n$ & Median & IQR & .80 & .70 & .50 & Pooled $r$ [95\% CI] \\
\midrule
\textbf{WLE (separation)} & $\bar{w}$ (MSEM) & 879 & 867 & 0.801 & 0.69--0.87 & 50\% & 27\% & 10\% & 0.794 [0.780, 0.806] \\
Cronbach's $\alpha$ & CTT anchor & 718 & 647 & 0.839 & 0.76--0.90 & 36\% & 14\% & 4\% & 0.843 [0.831, 0.854] \\
Guttman $\lambda_2$ & CTT anchor & 718 & 647 & 0.848 & 0.77--0.91 & 31\% & 13\% & 4\% & 0.854 [0.843, 0.864] \\
TRC & expected-form & 825 & 807 & 0.849 & 0.76--0.92 & 34\% & 16\% & 6\% & 0.860 [0.849, 0.871] \\
\textbf{EAP (empirical)} & $\rho_{\mathrm{PSD}}$ & 889 & 877 & 0.859 & 0.78--0.91 & 30\% & 14\% & 4\% & 0.862 [0.853, 0.870] \\
MRC & expected-form & 825 & 805 & 0.863 & 0.79--0.91 & 27\% & 11\% & 1\% & 0.868 [0.860, 0.875] \\
\bottomrule
\end{tabular}%
}
\end{table}

%% file: floats/tab3_metareg.tex
\begin{table}[t]
\centering
\caption{What moves reliability: correlated-and-hierarchical-effects (CHE) meta-regression of $L = \ln(1-r)$ on design features and construct type, with CR2 cluster-robust standard errors (clustering on study). Negative coefficients indicate \emph{higher} reliability. $N = 877$ (EAP) / $865$ (WLE) units in $433/422$ study clusters. Reference construct: Cognitive/educational. $^{*}p<.05$, $^{**}p<.01$, $^{***}p<.001$; df = CR2 Satterthwaite degrees of freedom; $^{\dagger}$ marks contrasts with df $<4$, where small-sample robust inference is unreliable and stars should be discounted. Full diagnostics, $\rho$ sensitivity, and the residual-covariate audit appear in Appendix~\ref{app:metareg}.}
\label{tab:metareg}
\footnotesize
\setlength{\tabcolsep}{7pt}
\begin{tabular}{lcrcr}
\toprule
 & \multicolumn{4}{c}{Coefficient on $\ln(1-r)$, $\hat\beta$ (SE)} \\
\cmidrule(lr){2-5}
 & \multicolumn{2}{c}{EAP} & \multicolumn{2}{c}{WLE} \\
\cmidrule(lr){2-3}\cmidrule(lr){4-5}
Moderator & $\hat\beta$ (SE) & df & $\hat\beta$ (SE) & df \\
\midrule
\multicolumn{5}{l}{\itshape Design features} \\
log(items) & $-0.385$ (0.030)$^{***}$ & 99 & $-0.341$ (0.035)$^{***}$ & 88 \\
log(persons) & $+0.053$ (0.013)$^{***}$ & 150 & $+0.054$ (0.013)$^{***}$ & 157 \\
Response categories & $-0.127$ (0.014)$^{***}$ & 43 & $-0.083$ (0.013)$^{***}$ & 40 \\
Longitudinal & $+0.029$ (0.076) & 44 & $+0.114$ (0.095) & 44 \\
\addlinespace
\multicolumn{5}{l}{\itshape Construct type (ref.\ Cognitive/educational)} \\
\quad Affective/mental health & $-0.293$ (0.078)$^{***}$ & 86 & $-0.204$ (0.082)$^{*}$ & 78 \\
\quad Personality & $-0.149$ (0.084) & 50 & $+0.248$ (0.120)$^{*}$ & 45 \\
\quad Opinion/attitude & $-0.068$ (0.085) & 68 & $+0.056$ (0.093) & 63 \\
\quad Behavioral & $-0.084$ (0.095) & 36 & $+0.118$ (0.122) & 33 \\
\quad Physical health & $-0.106$ (0.170) & 15 & $+0.005$ (0.175) & 15 \\
\quad Developmental & $-0.056$ (0.263) & 5 & $+0.092$ (0.298) & 5 \\
\quad Other & $-0.039$ (0.260) & 9 & $+0.000$ (0.318) & 8 \\
\quad Unclassified & $+0.039$ (0.085) & 121 & $+0.153$ (0.088) & 114 \\
\midrule
$\tau^2$ (study / unit) & 0.101 / 0.235 & & 0.146 / 0.275 & \\
\bottomrule
\end{tabular}
\end{table}

%% file: appendices/appendix_A.tex
\section{Formal results for the estimand family}\label{app:family}

In this appendix we formalize the family of Section~\ref{sec:family}, prove
the total order of \cref{thm:order}, and report its numerical verification.

\subsection{Definitions}\label{app:family-defs}

Let $\theta \sim G$ with $0 < \sigma^2_\theta < \infty$, and let
$\Jinfo(\theta) > 0$ $G$-almost surely be the test information function of a
fixed calibration. Assume $\E_G[\Jinfo] < \infty$ and
$\E_G[1/\Jinfo] < \infty$. The three information functionals are
\begin{align}
\wbar &= \frac{\sigma^2_\theta}{\sigma^2_\theta + \E_G[1/\Jinfo(\theta)]},
\tag{A.1}\label{eq:A-wbar}\\[2pt]
\rho_{\mathrm{PSD}}^{(I)} &= 1 - \frac{1}{\sigma^2_\theta}\,
\E_G\!\left[\frac{1}{\Jinfo(\theta) + 1/\sigma^2_\theta}\right],
\tag{A.2}\label{eq:A-psd}\\[2pt]
\rhotilde &= \frac{\sigma^2_\theta\,\bar{\Jinfo}}{\sigma^2_\theta\,\bar{\Jinfo}
+ 1}, \qquad \bar{\Jinfo} = \E_G[\Jinfo(\theta)].
\tag{A.3}\label{eq:A-tilde}
\end{align}
In \cref{eq:A-psd}, the superscript $(I)$ distinguishes this functional from
the exact posterior reliability $\rhopsd=1-\E[\PSD^2]/\sigma^2_\theta$. It
uses the standard Gaussian/Laplace information approximation
$\PSD^2(\theta) \approx [\Jinfo(\theta) + 1/\sigma^2_\theta]^{-1}$
\parencite{bock_adaptive_1982}; the theorem is stated only for the three
integral forms A.1--A.3. Their empirical counterparts replace the
$G$-integrals with averages over realized scores: WLE separation reliability
$1 - \overline{\mathrm{SE}^2}/\Var(\hat\theta_{\mathrm{WLE}})$
\parencite{warm_weighted_1989, adams_reliability_2005} and EAP empirical
reliability $\Var(\hat\theta_{\mathrm{EAP}})/[\Var(\hat\theta_{\mathrm{EAP}})
+ \overline{\PSD^2}]$ \parencite{chalmers_mirt_2012}.

\subsection{Proof of the total order}\label{app:family-proof}

\noindent\emph{Claim.} $\rhotilde \ge \rho_{\mathrm{PSD}}^{(I)} \ge \wbar$,
with equalities if and only if $\Jinfo$ is $G$-almost surely constant.

\medskip
\noindent\emph{Proof.} Write $c = 1/\sigma^2_\theta > 0$.

\emph{(i) Upper inequality.} The map $J \mapsto (J + c)^{-1}$ is strictly
convex on $(0, \infty)$. By Jensen's inequality,
\[
\E_G\!\left[\frac{1}{\Jinfo + c}\right] \;\ge\; \frac{1}{\bar{\Jinfo} + c},
\]
with equality if and only if $\Jinfo$ is $G$-a.s.\ constant. Since
$\rho_{\mathrm{PSD}}^{(I)}$ decreases in that expectation, and
$(1/\sigma^2_\theta)\cdot(\bar{\Jinfo}+c)^{-1} = c/(\bar{\Jinfo}+c) =
1/(\sigma^2_\theta\bar{\Jinfo}+1)$, we obtain
\[
\rho_{\mathrm{PSD}}^{(I)} \;\le\; 1 - \frac{1}{\sigma^2_\theta}\cdot
\frac{1}{\bar{\Jinfo} + c} \;=\; 1 - \frac{1}{\sigma^2_\theta\bar{\Jinfo} + 1}
\;=\; \frac{\sigma^2_\theta\bar{\Jinfo}}{\sigma^2_\theta\bar{\Jinfo} + 1}
\;=\; \rhotilde.
\]

\emph{(ii) Lower inequality.} Substitute $X = 1/\Jinfo > 0$, so that
$(\Jinfo + c)^{-1} = X/(1 + cX)$ and $\E_G[1/\Jinfo] = \E[X]$. Here the map
$\varphi(x) = x/(1 + cx)$ has $\varphi''(x) = -2c/(1+cx)^3 < 0$ and is
strictly concave. By Jensen's inequality, \[ \E[\varphi(X)] \;\le\;
\varphi(\E[X]) \;=\; \frac{\E[X]}{1 + c\,\E[X]}, \] with equality if and
only if $X$, and hence $\Jinfo$, is constant. Then
\[ \rho_{\mathrm{PSD}}^{(I)} \;=\; 1 -
c\,\E[\varphi(X)] \;\ge\; 1 - \frac{c\,\E[X]}{1 + c\,\E[X]} \;=\; \frac{1}{1
+ c\,\E[X]} \;=\; \frac{\sigma^2_\theta}{\sigma^2_\theta + \E[1/\Jinfo]}
\;=\; \wbar. \]

\emph{(iii) Constant information.} If $\Jinfo \equiv I$, direct substitution
gives $\wbar = \rho_{\mathrm{PSD}}^{(I)} = \rhotilde =
I\sigma^2_\theta/(I\sigma^2_\theta + 1)$.
\hfill$\square$

\medskip
The two applications of Jensen's inequality act on two convexity structures
of the same variance ratio, convex in information and concave in inverse
information, and together they enclose $\rho_{\mathrm{PSD}}^{(I)}$ from both
sides. The proof establishes the sign and equality condition of this Jensen
gap. It does not imply that the gap is monotone in
$\operatorname{Var}\{\Jinfo(\theta)\}$, curvature, or an unspecified index of
information dispersion; such a result would require a separately defined
ordering and proof.

\subsection{Numerical verification}\label{app:family-verify}

Two independent audits accompany the proof. The first is a random-design
sweep of $2 \times 10^5$ random test designs, with item counts from 3 to 60,
two-parameter logistic and generalized-partial-credit items in equal
proportion, lognormal discriminations, and $G$ drawn from normal,
skewed-$\chi^2$, and bimodal-mixture families with $\sigma^2_\theta \in
[0.3, 3]$. All three integrals were evaluated by 161-point quadrature. In
every draw the order held, with no violations at relative tolerance
$10^{-9}$, and the minimum observed gaps, of order $10^{-8}$, occurred for
near-flat information designs, as the equality condition requires. This
audit verifies the integral-form algebra. Whether the ordering transfers to
the corpus estimators in finite samples is an empirical matter, and
Section~\ref{sec:distribution} shows that it holds in 97.1\% of units. The
second audit consists of six worked scenarios that trace the estimands under
controlled distortions of information concentration, prior misspecification,
and restricted sampling. They reproduce the qualitative signatures observed
in the corpus, including negative $\wbar$ under severe mismatch between the
information function and the population, where $\wbar = -0.42$ in the
constructed scenario and the corpus contains an analogous empirical case
with WLE separation reliability $-0.31$. Note that the reliability of
estimates used here, which belongs to the parallel-forms correlation
lineage, is not the same as the squared true-score correlation, and the two
definitions need not agree for biased estimators \parencite{kim_note_2012}.
In this corpus the difference is negligible.

\subsection{Expected versus empirical versions}\label{app:family-versions}

For each latent-scale estimand, the expected version integrates over an
assumed $g$, invariably $N(0,1)$, and the empirical version averages over
the realized sample \parencite{zhang_empirical_2006}. Expected versions are
reproducible from parameters alone but inherit the normality assumption,
whereas empirical versions are assumption-light and reflect the realized
sample, including range restriction and adaptive selection. Both faces
appear in the corpus results. EAP in its empirical version and MRC in its
expected version nearly coincide at the corpus level (\cref{tab:portfolio}),
because calibration samples are large and posterior shrinkage limits tail
behavior. For $\wbar$-path quantities, by contrast, the same choice matters
greatly, since the unbounded integrand amplifies any mismatch between
assumed and realized ability distributions \parencite{lee_reliability_2025}.

%% file: appendices/appendix_B.tex
\section{Corpus construction and processing}\label{app:corpus}

\subsection{Snapshot and unit of analysis}

All analyses use a frozen local snapshot of the Item Response Warehouse
\parencite{domingue_introduction_2025}, version 1.2, acquired April 20--22,
2026 and treated as read-only thereafter. Each warehouse table is
materialized in the common long format, with columns \code{id}, \code{item},
and \code{resp}, plus wave identifiers where present. The snapshot combines
two collections: 745 tables from the canonical warehouse release and 144
provisionally ingested tables. A provenance indicator distinguishing the two
enters the residual-covariate audit and is null (\cref{app:metareg}), so the
collections are pooled throughout.

The unit of analysis is the administration unit, one response matrix from
one table--wave combination. For tables with repeated administrations,
detected through wave, time, or session columns, a documented reduction rule
selects one administration. If a wave column fully explains duplicate
person--item cells, the most complete wave is retained, which applies to 93
units. If duplicates remain unexplained, the first observation per cell is
used and the unit is flagged, which applies to 80 units. Remaining units,
716 in all, are single administrations. Administration policy is carried as
a moderator-eligible flag, and the repeated-measures indicator in the main
meta-regression marks units originating from any repeated-measures table.

\subsection{Eligibility and caps}

Eligibility was fixed before the corpus run and required at least 3 items,
at least 2 observed response categories, and at least 100 respondents. The
inventory contains 933 eligible records; three datasets appear in both the
canonical and the provisional collection, and keeping one copy of each
leaves 930 unique datasets. Two caps exclude formats outside the estimand's
scope, namely units with more than 15 observed categories per item (slider
and visual-analog formats) and units with more than 600 items (large,
mostly adaptive item banks). Because some units meet both descriptions, the
released machine reason code uses the executed priority rule: 23 units
receive the category-cap code and 17 the item-bank code. One further unit
was degenerate, having two items after cleaning. Of the 930 unique eligible
units attempted, 889 (95.6\%) produced valid primary estimates. Every
exclusion carries a machine-readable reason code in the released ledger,
and there were no unexplained failures.

\subsection{Response processing}

Item responses are densely recoded per item to $0, 1, \ldots, K_j - 1$, as
the GPCM likelihood requires. Items with a single observed category are
dropped, as are persons with no responses. Missing responses are treated as
ignorable at the scoring stage, through marginal maximum likelihood with a
missing-at-random likelihood contribution. The model-free anchors $\alpha$
and $\lambda_2$ require complete cases and are reported only where item-level
missingness is below 5\% and at least 100 complete cases remain, a
deliberately conservative gate given the sensitivity of sample coefficients
to missingness \parencite{enders_impact_2004}.

\subsection{Item-key alignment}\label{app:keyalign}

The IRW data standard guarantees that each item's response can be treated as
ordinal \parencite{domingue_introduction_2025}. It does not guarantee that
items within a table share a direction, and it should not, since a
reverse-worded item is ordinal on its own scale and flipping it is an
analytic decision rather than a standardization step. For a consumer of the
warehouse the consequence is that any coefficient defined across items must
align item keys itself. Our pipeline initially omitted this step; this
appendix documents the corrective rule, its validation, and the effect of
the correction estimator by estimator.

\subsubsection{The rule and its validation}

We align item keys once, immediately after dense recoding and before any
estimator runs, so that every coefficient sees the same matrix. The rule is
a post hoc corrective rule, introduced after the mixed-keying defect was
found and frozen before the correction rerun; it is algorithmic and does not
use an IRT outcome to decide acceptance. For each unit we form the
pairwise-complete item correlation matrix, treating cells whose item pair is
co-observed fewer than 30 times as uninformative. We then take the leading
eigenvector, orient it so that the majority of clearly loading items
($|v_j| \ge .15$) is positive, and reverse-score the items with
$v_j \le -.15$. Finally, the step is self-validating, in that the alignment
is accepted only if it raises the mean inter-item correlation by at least
.03 and is otherwise discarded entirely.

The acceptance test is not decorative, since two simpler rules failed on
this corpus and were abandoned. The first flipped items whose mean
correlation with the remaining items was negative. When reverse-worded items
make up half the scale, each item's average correlation is close to zero, so
the rule detects nothing, and it recovered none of the Rosenberg scale's
five reverse-worded items. On some units it instead flipped every item, an
operation that leaves $\alpha$ unchanged. The second rule took the leading
eigenvector's sign without an acceptance test, and it produced a false
positive on a national achievement test administered in a matrix-sampled
booklet design. There, with 37.5\% missingness, the pairwise correlation
matrix is nearly uninformative, the leading eigenvalue accounting for 21.2\%
of the total and the median loading being zero, and a handful of large
positive loadings determined the orientation for 55 of 72 items. The
acceptance test rejects this case, because reversing genuinely uninformative
items cannot raise the mean inter-item correlation.

We validated the final rule against instruments whose reverse-item counts
are published. It recovers 5 of 10 for the Rosenberg Self-Esteem Scale, 2 of
4 for the four-item Perceived Stress Scale, 3 of 6 for the Brief Resilience
Scale, 3 of 6 for the Life Orientation Test, and 5 of 10 for the
extraversion marker scale, in each case matching the published key. These
known positives support face validity but do not estimate corpus-wide
sensitivity or specificity. Eleven accepted units carry the broad
cognitive/educational metadata label and require item-level adjudication;
blanket claims that this class is untouched are therefore unwarranted.
Alignment changes the response matrix for 141 of 889 units (15.9\%), and the
remaining units are bit-identical to the unaligned pipeline, so their
estimates are unchanged by construction.

\subsubsection{Effect of the correction, estimator by estimator}

\input{floats/tabS_keyalign}

Not all six coefficients had degraded alike, and \cref{tab:keyalign}
summarizes the correction. An estimator that fits free item slopes can often
absorb a reversed item by assigning it a negative discrimination, which
makes the latent scale comparatively stable, whereas an estimator defined on
the raw sum score, or fitted under a fixed slope, has no such freedom. Panel
A shows the mechanism on a single instrument: on a ten-item extraversion
scale with five reverse-worded items, the free-slope calibration returned
exactly five negative slopes, and after alignment the same slopes reappeared
with identical magnitudes and positive signs, with EAP reliability unchanged
at .896 in either case. Panel B shows the corpus pattern: across the 141
affected units, the median change from alignment was .000 for EAP and .000
for MRC, against $+.205$ for $\lambda_2$, $+.284$ for WLE separation,
$+.378$ for $\alpha$, and $+.396$ for TRC. On the non-aligned control units
every estimator moved by exactly zero. MRC and TRC provide the sharpest
median contrast, since they come from the same \pkg{mirt} calibration and
differ only in the scale on which reliability is defined: the latent-scale
member was median-stable while the sum-score member recovered by .396.

The single-instrument example is not universal, and median stability should
not be read as immunity. Across the corpus, EAP changes by more than .001 in
18 of the 141 units, with four gains above .10 and a range from $-.030$ to
$+.833$. Which branch a coefficient sits on helps determine its failure
modes as well as its value, and a defect can look small in an aggregate
precisely because a free-slope estimator often absorbs sign reversal; the
EAP median showed no symptom while four companion estimators were much more
strongly affected. This is one argument for reporting a portfolio rather
than a single coefficient.

The correction also propagated into the corpus-level results. The
specification-curve span of Figure~\ref{fig:spec} in the main text was .098
before
alignment and is .075 after, so roughly a quarter of what would have
appeared as estimand disagreement was processing artifact. Negative-WLE
counts fell from 48 to 21, and most of an apparent gap between our computed
alpha distribution and the reported alphas of
\textcite{greco_meta-analysis_2018} disappeared (Section~\ref{sec:relation}
of the main text). Three of the realigned second-wave units were originally compared
across different 5{,}000-person samples; the released before/after
comparison refits both states on one deterministic person frame, with
identical person hashes, and archives the residual difference from the
historical export alongside the result. For warehouse analytics in general
the lesson is procedural: coefficients defined across items require key
alignment, and the alignment rule needs validation against instruments with
published keys.

\subsection{Large-sample policy}

The second completion wave contains the 101 largest datasets, namely all 75
units in the XL processing tier and the 26 largest units in the medium
tier. Every second-wave unit with more than 5{,}000 respondents is analyzed
on a fixed random sample of 5{,}000 persons: all 75 XL-tier units and 14
medium-tier units, 89 units in total. The sample is drawn once per unit with
a deterministic seed derived from the unit identifier. Reliability
coefficients are population functionals, so subsampling persons changes
only estimation precision. The induced sampling error, roughly .003--.006
on the reliability scale and captured by the person bootstrap, which
resamples within the same 5{,}000-person frame, is an order of magnitude
below the between-unit heterogeneity of interest ($\hat\tau \approx .7$ on
the $L$ scale). Because the same sampled matrix feeds both co-primary
estimators and both auxiliary and anchor families, estimator contrasts
within a unit are never confounded by the sampling frame.

\subsection{Two-wave completion}

The corpus was computed in two waves under the same eligibility rules and
primary estimator definitions: first the 788 smaller datasets, then the
second wave of the 101 largest, which includes all internet-panel and
administrative datasets. Section~\ref{sec:composition} of the main text
reports the composition analysis this design permits, and the wave-level deconvolution appears in
\cref{tab:strata}; of the second wave, 94 units carry an EAP standard error
and 100 a WLE standard error, which is why the composition analysis counts
differ from 101. All 101 second-wave units converged, with weighted
likelihood scoring enabled throughout, and the largest calibrated unit has
379 items. The post hoc item-key rule was subsequently applied in one
correction round to both waves.

%% file: floats/tabS_keyalign.tex
\begin{table}[H]
\centering
\caption{Item-key alignment. Panel A: fitted GPCM slopes for a ten-item extraversion marker scale with five reverse-worded items, before and after alignment. The model absorbs the reversal as negative discriminations, and alignment restores their sign at similar magnitude. Panel B: median change across the 141 units whose response matrix alignment altered, by estimator; rows carry each estimator's own coverage among those units (141 for the co-primaries, 137 for MRC and TRC, 130 for the complete-case anchors). Free-slope estimators are median-stable with consequential exceptions, whereas raw-sum-score coefficients and the fixed-slope WLE implementation generally increase.}
\label{tab:keyalign}
\footnotesize
\begin{tabular}{lrrr}
\toprule
\multicolumn{4}{l}{\emph{Panel A. Fitted slopes, \texttt{bfi\_goldberg\_1992\_extraversion}}} \\
\midrule
 & Negative slopes & EAP $\rho$ & \\
Before alignment & 5 of 10 & 0.896 & \\
After alignment (5 items reversed) & 0 of 10 & 0.896 & \\
\addlinespace
\multicolumn{4}{l}{\emph{Panel B. Median change on the 141 realigned units}} \\
\midrule
Estimator & Item slopes & Median before & Median after \\
\midrule
EAP (empirical) & free (mirt) & 0.866 & 0.866 \\
MRC & free (mirt) & 0.865 & 0.865 \\
$\lambda_2$ & sum score & 0.590 & 0.835 \\
TRC & sum score & 0.396 & 0.837 \\
WLE (separation) & fixed (TAM PCM) & 0.444 & 0.796 \\
$\alpha$ & sum score & 0.398 & 0.820 \\
\bottomrule
\end{tabular}
\end{table}

%% file: appendices/appendix_C.tex
\section{Estimation details}\label{app:estimation}

\subsection{Model dispatch and software}

Each unit receives one primary calibration by marginal maximum likelihood,
the Rasch model for the 231 dichotomous units
\parencite{rasch_probabilistic_1960} and the generalized partial credit model
for the 658 polytomous units \parencite{muraki_generalized_1992,
masters_partial_1982}. Software versions are \pkg{mirt} 1.46.1
\parencite{chalmers_mirt_2012} with an EM iteration ceiling of 2{,}000
cycles, \pkg{TAM} 4.3-25 \parencite{robitzsch_tam_2026} with a ceiling of 500
iterations for the weighted likelihood scoring, and R 4.6.0
\parencite{rcoreteam_2026}. Every unit carries a deterministic seed, a 32-bit
hash of its identifier plus a global constant, so that all stochastic steps,
including XL person sampling and bootstrap resampling, are exactly
reproducible.

\subsection{The six coefficients}

From the \pkg{mirt} calibration we compute EAP empirical reliability,
$\Var(\hat\theta)/[\Var(\hat\theta)+\overline{\PSD^2}]$ over realized EAP
scores, and, for units with at most 100 items, the expected-form MRC and TRC
via \pkg{irtreliability} \parencite{andersson_irtreliability_2022,
andersson_large_2018}, which require a successful item-parameter covariance
matrix from an observed-information refit. From the \pkg{TAM} calibration we
compute WLE separation reliability,
$1 - \overline{\mathrm{SE}^2}/\Var(\hat\theta)$ over realized
weighted-likelihood scores \parencite{warm_weighted_1989}. From raw complete
cases we compute $\alpha$ ($=\lambda_3$) and $\lambda_2$
\parencite{guttman_basis_1945, cronbach_coefficient_1951}. Coverage by
coefficient is reported in \cref{tab:portfolio}, and every non-covered unit
carries an explicit reason flag in the released dataset, either a missingness
gate, an item-count gate, or a covariance-matrix failure.

\subsection{Cross-package agreement}

Because the co-primary contrast is computed across two packages, we examined
where the packages agree on the quantity they share. An early pilot found EAP
reliabilities from \pkg{mirt} and \pkg{TAM} on identical matrices to agree to
a median absolute difference below .001, and we initially reported that
figure as corpus-wide. Recomputed across all 879 units carrying both values,
the median absolute difference is .0139, and it decomposes cleanly by model:

\begin{center}
\footnotesize
\begin{tabular}{lrrr}
\toprule
Calibration & Units & Median $|\Delta|$ & $|\Delta| > .01$ \\
\midrule
Rasch (dichotomous; same model in both packages) & 224 & $.0004$ & 2\% \\
Polytomous (\pkg{mirt} GPCM vs.\ \pkg{TAM} PCM) & 655 & $.0293$ & 73\% \\
\bottomrule
\end{tabular}
\end{center}

The pilot was dominated by dichotomous units, where the packages fit the same
Rasch model and agreement is essentially exact. On polytomous units they do
not fit the same model, since \pkg{TAM}'s default constrains the slope equal
across items whereas \pkg{mirt}'s GPCM frees it, with fitted slopes on one
ten-item scale ranging from 0.90 to 2.41. Close agreement was therefore never
to be expected there.

We regard this as a property of the design rather than as a defect to be
removed. Separation reliability belongs to the Rasch tradition and is
conventionally computed in fixed-slope models, while posterior-based
reliability is conventionally computed under free-slope calibrations, so
reporting the pair compares the two estimands as they are actually produced.
To assess the part of the contrast associated with the model rather than the
estimand, we refitted \pkg{TAM} with free slopes (\code{irtmodel =
"GPCM"}) on 30 polytomous units stratified by item count and sample size.
Across those units the median change in WLE separation reliability is
$+.009$, from .822 to .842, but the dispersion is not negligible: 23\% of
units move by more than .05, 2 units cross the .80 threshold, and one
degenerate unit moves from $-2.67$ to $-9.93$. This small, cost-selected
diagnostic suggests limited median movement but does not identify a
corpus-wide model-only effect; individual units can be sensitive.

Separately, the full pipeline reproduces an independent hand-built 2025
analysis of 19 datasets with correlation 1.000 on WLE separation reliability,
including an empirical negative case of $-0.31$. That audit is archived with
the pilot outputs in the replication package. Note also that the quantity
\pkg{mirt} labels WLE reliability is a different functional, obtained by
plugging WLE scores into the posterior-path formula, which is why WLE
separation reliability is taken from \pkg{TAM}. This is the kind of naming
collision that Section~\ref{sec:family} is intended to prevent.

\subsection{Compute}

The corpus computes on commodity resources. Point estimation for the small
and medium strata took about 40 core-minutes on a laptop-class machine, and
the bootstrap standard errors and the second (large-dataset) wave together
took 4.6 hours on a 32-vCPU cloud instance, with a resumable ledger that isolates
failures at the unit level. Total cloud cost was under \$15, which indicates
that attaching standard errors to reliability coefficients routinely is not a
question of resources.

%% file: appendices/appendix_D.tex
\section{Standard errors and their cross-validation}\label{app:ses}

\subsection{Bootstrap design for the co-primaries}

For EAP reliability we use a full-refit person bootstrap, in which each
replicate resamples persons with replacement and re-estimates the entire
model, calibration and scoring alike, so that the standard error reflects
both calibration and sampling uncertainty. Replicate counts are cost-aware.
A target $B$ of 100, 60, or 40 by size stratum is reduced when a unit's
measured single-fit time implies that the target would exceed a 240-second
per-unit budget, with a hard floor of 12 \emph{attempted} replicates. Failed
refits are discarded, and the successful count is recorded per unit in the
released dataset. Refits use a reduced EM ceiling
of 500 cycles, which pilot runs showed to alter replicate reliabilities by
less than .001 relative to the full ceiling. The successful-$B$ distribution
has median 100 and lower quartile 40; 19.9\% of units sit below $B = 30$, 36
fall below 12, and 11 have at most three successful refits and therefore no
EAP standard error. Among units with a usable estimate but fewer than 30
successful refits, a single $v_{si}$ carries Monte Carlo error of roughly
$1/\sqrt{2(B-1)}$, or 13--21\% of the standard error. This noise is
unit-local and mean-zero on the variance scale, and aggregated quantities
depend on $\bar v$, whose share of total dispersion is about 1\%, so it is
immaterial downstream.

For WLE separation reliability, re-estimating \pkg{TAM} on every replicate is
computationally prohibitive at corpus scale, at tens of seconds to minutes
per fit on polytomous units. Each unit is therefore calibrated once and the
realized pairs $(\hat\theta_i, \mathrm{SE}_i)$ are resampled with $B = 200$,
recomputing $1 - \overline{\mathrm{SE}^2}/\Var(\hat\theta)$ per replicate.
This standard error is conditional on the calibration, capturing
person-sampling variation but not item-parameter uncertainty, and is therefore a
lower bound. Two considerations support the design. At the corpus
median of $N = 909$, calibration uncertainty is a second-order contribution
to score-based reliability. Moreover, the headline deconvolution conclusion
is insensitive to it, since doubling the WLE sampling variances would raise
the noise share of observed variance only from 0.8\% to about 1.6\%, leaving
more than 98\% of the dispersion attributable to genuine heterogeneity.
Median bootstrap standard errors are .006 for EAP and .010 for WLE on the
reliability scale.

\subsection{Calibrating the WLE lower bound}

Because the pair bootstrap conditions on the calibration, we measured the
omitted share directly. We attempted 30 cross-validation units and obtained
29 usable paired comparisons from
full-refit \pkg{TAM} person bootstraps with $B = 60$, every replicate
re-estimating item parameters and rescoring. Their full-to-pair
standard-error ratios have median 1.06, interquartile range 0.95--1.20, and
range 0.092--1.688, implying a median variance factor of 1.12. Propagating
this factor through the deconvolution changes the model-based noise share
from 0.79\% to 0.89\% and no true-share estimate by more than 0.1 percentage
points. Even an arbitrary factor of 2 leaves all conclusions intact
(\cref{app:deconv}).

\subsection{Analytic standard errors}

The model-free anchors use the multinomial delta-method standard errors of
\textcite{ark_standard_2025}, computed from the published implementation,
vendored verbatim with attribution, and gated to complete-case $n \le
5{,}000$ and $J \le 100$ for cost. MRC and TRC use the item-parameter delta
method of \textcite{andersson_large_2018} via \pkg{irtreliability}, which
requires the calibration's parameter covariance matrix.

\subsection{Cross-validation}

\input{floats/tabS_crossval}

\begin{figure}[H]
\centering
\includegraphics[width=0.92\textwidth]{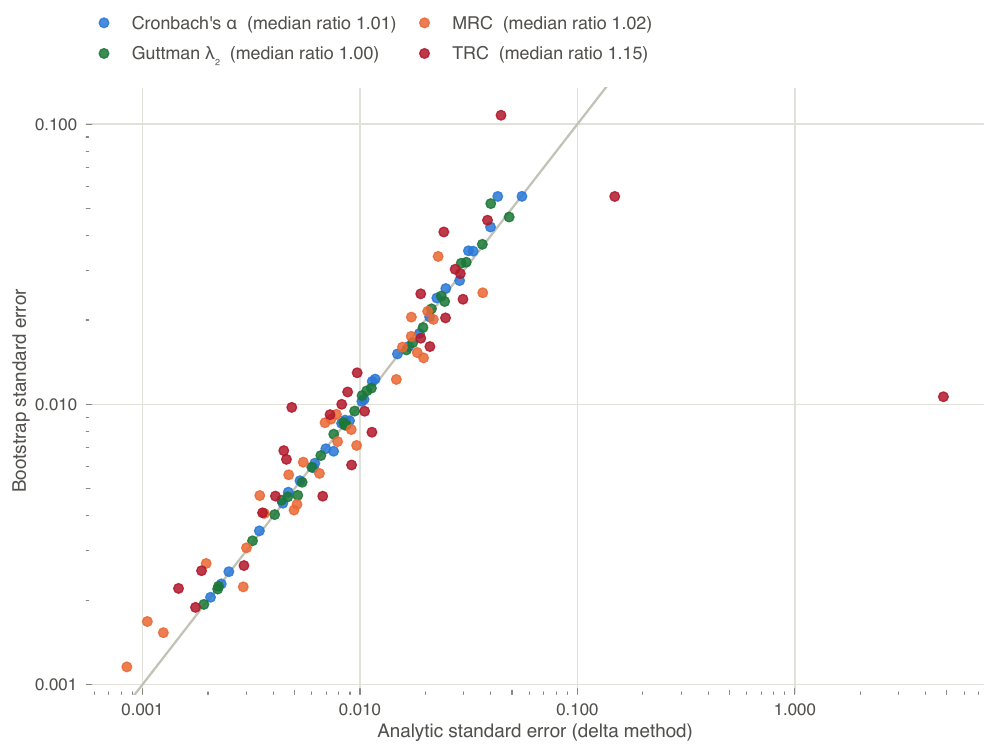}
\caption{Bootstrap versus analytic standard errors on the 30-attempt
cross-validation sample (29 usable for several summaries), on a log--log
scale with the diagonal marking
equality. The classical-test-theory coefficients sit on the line, whereas MRC
and especially TRC scatter mildly above it.}
\label{fig:crossval}
\end{figure}

Thirty units were selected by stratified sampling on item format and size,
among units carrying both analytic-standard-error families and admitting
fast refits ($J \le 60$); the usable rows comprise 26 small-tier units, one medium-tier, and two
XL-tier
units, so large and slow units are severely underrepresented. For $\alpha$
and $\lambda_2$ we ran a $B =
1{,}000$ person bootstrap on the same complete-case frame as the delta
method, and for MRC and TRC a cost-aware bootstrap with $B \in [20, 50]$ in
which every replicate re-runs the full covariance-matrix pipeline.
\Cref{tab:crossval} and \cref{fig:crossval} summarize the outcome. The van
der Ark delta method is essentially exact, with median ratios of 1.006 and
1.003 and log-standard-error correlations of .998. The item-parameter delta
method under-covers mildly, with median ratios of 1.02 for MRC and 1.15 for
TRC. Note that this gap does not diminish with sample size, since the Spearman
correlation between ratio and $N$ is in fact positive, at $+.23$ for MRC.
That identifies its source as uncertainty the delta method conditions away,
namely model-specification propagation and the fixed $N(0,1)$ prior, rather
than as small-sample bias in the asymptotics. All four coefficients fall
inside the pre-specified acceptance band $[0.80, 1.25]$ on the median, but
TRC only on the median: 38\% of its units lie in band, and its
log-standard-error correlation with the bootstrap is .64, against .998 for
the two classical-test-theory anchors and .98 for MRC. Accordingly, we use
bootstrap standard errors for the co-primaries, treat the delta-method
auxiliaries as mildly anticonservative, and regard TRC's standard error as
the least trustworthy quantity in the portfolio. This ranking has no
practical consequence downstream, because auxiliary sampling variances are
dominated by heterogeneity in every pooled analysis ($v_{si} \ll \tau^2$).

%% file: floats/tabS_crossval.tex
\begin{table}[H]
\centering
\caption{Cross-validation of analytic standard errors against person bootstrap: 30 units were attempted and 29 supplied usable MRC/TRC bootstrap pairs. Ratio $=$ SE$_{\text{boot}}$/SE$_{\text{analytic}}$; the reference band for the median ratio was $[0.80, 1.25]$.}
\label{tab:crossval}
\footnotesize
\begin{tabular}{lrrrrr}
\toprule
Coefficient & $n$ & Median ratio & IQR & \%\ in band & $r$(log SE) \\
\midrule
Cronbach's $\alpha$ (vdA delta) & 30 & 1.006 & 0.993--1.043 & 97\% & 0.998 \\
Guttman $\lambda_2$ (vdA delta) & 30 & 1.003 & 0.977--1.029 & 97\% & 0.998 \\
MRC (A--X delta) & 29 & 1.023 & 0.856--1.203 & 69\% & 0.976 \\
TRC (A--X delta) & 29 & 1.150 & 0.822--1.335 & 38\% & 0.639 \\
\bottomrule
\end{tabular}
\end{table}

%% file: appendices/appendix_E.tex
\section{Deconvolution details}\label{app:deconv}

\subsection{Model and identification}

The three-level model of \cref{eq:decon} decomposes each observed
$\hat L_{si} = \ln(1 - \hat r_{si})$ into a corpus mean $\mu$, a study
deviation $u_s \sim N(0, \tau^2_{\mathrm{study}})$, a within-study unit
deviation $u_{si} \sim N(0, \tau^2_{\mathrm{unit}})$, and sampling noise
$\varepsilon_{si} \sim N(0, v_{si})$ with the estimated $v_{si}$ treated as
known, the squared bootstrap standard error on the $L$ scale. Because
$v_{si}$ is supplied rather than estimated jointly, restricted maximum
likelihood can separate the
noise layer from the two heterogeneity layers. This is the classical
known-variance empirical-Bayes structure
\parencite{konstantopoulos_fixed_2011, vandennoortgate_threelevel_2013}, that
is, a parametric deconvolution in the sense of
\textcite{efron_deconvolution_2016}. Estimation uses \pkg{metafor}
\parencite{viechtbauer_conducting_2010} with sparse matrices.

\subsection{Back-transformation}

Writing $\hat\tau^2 = \hat\tau^2_{\mathrm{study}} + \hat\tau^2_{\mathrm{unit}}$,
the implied true distribution of $L$ is $N(\hat\mu, \hat\tau^2)$, so the true
share below a reliability threshold $c$ is
\begin{equation}
\widehat{\Pr}(r_{\mathrm{true}} < c) \;=\;
1 - \Phi\!\left(\frac{\ln(1-c) - \hat\mu}{\hat\tau}\right),
\end{equation}
using the monotone-decreasing map $r < c \iff L > \ln(1-c)$, and the true
density on the reliability scale is
\begin{equation}
f(r) \;=\; \frac{1}{\hat\tau\,(1-r)}\;
\phi\!\left(\frac{\ln(1-r) - \hat\mu}{\hat\tau}\right),
\end{equation}
which gives the red curves of \cref{fig:deconv}.

\subsection{Why the transformed scale}

Three considerations select the scale. First, the reliability scale is
bounded above at 1, so no normal mixing model is tenable there, whereas $L$
removes the ceiling. Second, WLE separation reliability can be negative,
which $L$ accommodates ($r < 0 \iff L > 0$) while logit-type transforms
cannot. Third, $L$ is the scale on which the meta-analysis of reliability
coefficients has established variance-stabilized machinery
\parencite{bonett_sample_2002, bonett_varying_2010}, with the delta rule
$\mathrm{SE}_L = \mathrm{SE}_r / (1 - r)$. Comparative work in the
reliability generalization literature supports the choice
\parencite{lopez-ibanez_reliability_2024}.

\subsection{Results and noise decomposition}

\input{floats/tabS_deconv}

\Cref{tab:deconv} reports the full fit. Two noise-share summaries agree. One
is the moment ratio $\bar v/\Var(\hat L)$, which gives $\bar v = 0.0050$
against observed variance $0.494$ for EAP, or 1.0\%, and $0.0040$ against
$0.519$ for WLE, or 0.8\%. The other is the model-based share $\bar
v/(\hat\tau^2 + \bar v)$, which gives 0.95\% and 0.79\%. Note that the REML
components need not reproduce the naive moment identity $\Var(\hat L) =
\tau^2 + \bar v$ exactly, since precision weighting and the hierarchical
structure break it, and for EAP $\hat\tau^2$ (0.519) slightly exceeds the
unweighted variance. On either accounting, noise is about 1\% of dispersion,
so the observed and true distributions are qualitatively close. Exact
threshold movements are not uniformly within one percentage point: EAP below
.80 changes by 1.22 points, WLE below .80 by 2.06 points, and WLE below .70 by
3.91 points.

Three targeted checks sharpen the picture. Excluding the 4 numerically
degenerate units (WLE below $-1$) moves the true WLE sub-.80 and sub-.50
shares to 51.1\% and 10.2\%, and excluding all 16 further negative units in
the deconvolution sample leaves 49.6\% and 8.7\%. That those units are
genuine now has a basis that an earlier version of this appendix lacked.
Before item keys were aligned (\cref{app:keyalign}) there were 48
negative-WLE units, and we described them as data rather than as artifacts.
Twenty-seven of them were artifacts of mixed keying, so that description was
wrong, and what survives alignment is a smaller and better-founded set.
Inflating the WLE sampling variances by the empirically calibrated factor
1.12 (\cref{app:ses}), or even by 2, leaves the true shares unchanged to one
decimal, at 51.8\% and 11.0\%. Finally, the parametric tail can be checked
against the observed finite-corpus shares, which are assumption-light but
noisy, and against the explicitly model-based diagnostics below. These checks
support the qualitative headline without implying uniform one-point
agreement.

\subsection{Varying versus random coefficients}\label{app:deconv-target}

\Citet{bonett_varying_2010} distinguishes three meta-analytic models that are
usually collapsed into two. A constant-coefficient model assumes one common
parameter. A varying-coefficient model allows each study its own true value
and targets their mean over the studies actually included. A
random-coefficient model, ordinarily called random effects, treats the
included studies as a random sample from a superpopulation and targets that
superpopulation's mean and variance. Bonett's objection to the third is not
that heterogeneity is implausible but that the sampling assumption usually
is, since studies are selected and not drawn, and a superpopulation must then
be described in clear and useful detail or the interval loses its
interpretation.

This objection is relevant here, because we cite Bonett for the $L$ scale
while departing from him on the model. Our corpus attempts every unique
eligible unit in a frozen warehouse snapshot and retains 889 analyzable
administrations among 930 attempts. There is no probability sampling step,
and a finite-corpus target would concern the 889 unknown unit coefficients.
We nevertheless fit a random-coefficient
model for a reason that concerns the estimand rather than the population.
Deconvolution needs a mixing distribution, because stripping estimation noise
from observed dispersion requires a model for what is being stripped toward.

We therefore report complementary summaries. The random-coefficient reading
is the normal-tail formula above. The empirical share of shrunken unit-level
predictions is a Gaussian empirical-Bayes diagnostic: it summarizes the 889
units but still inherits Gaussian exchangeability and shrinkage, and is
neither Bonett's varying-coefficient target nor an assumption-free estimator.
At $c = .80$ it differs from the normal tail by 1.6 points for EAP,
30.3\% against 28.7\%, and by 2.1 points for WLE, 51.8\% against 49.7\%.
Bonett's further concern, that random-coefficient methods degrade when the
superpopulation is not normal, is addressed directly by
\cref{tab:normality}, which finds the $L$ distribution close to Gaussian in
shape. Where the main text intends a superpopulation reading, it should be
understood as instruments of the kind that reach a public warehouse, which is
also the selection frame discussed in the limitations.

\subsection{Sampling dependence within studies}\label{app:deconv-rho}

Units within a study frequently share respondents, for instance subscales of
one battery, so their sampling errors are correlated. Our deconvolution
therefore uses the same constant-sampling-correlation working model as the
meta-regression (Section~\ref{sec:metareg-m}), with $\rho = 0.6$ imputed via
\pkg{metafor}'s \code{vcalc}, rather than treating the $v_{si}$ as
independent. Because the sampling variances are two orders of magnitude
smaller than the heterogeneity components, with median $v_{si} \approx
.0025$ against $\hat\tau^2_{\mathrm{study}} \approx .20$, the choice is
nearly inert. Moving from independence to $\rho = 0.6$ shifts
$\hat\tau^2_{\mathrm{study}}$ by less than .003 and no reported true share
by more than 0.1 percentage points. We use the correlated working model
regardless, so that every model in the paper makes one dependence assumption
rather than two.

\subsection{Assumptions and sensitivity}

The parametric assumption is normal mixing on $L$, and \cref{tab:normality}
diagnoses departures in the observed transformed estimates. In shape the observed $L$ distributions are close to
normal, with skewness $-0.16$ for EAP and $-0.11$ for WLE and kurtosis 3.16
and 3.79, and the Shapiro--Wilk test does not reject for EAP ($W = .997$, $p
= .14$). It does reject for WLE ($p < .001$); at $n = 865$ small departures
are detectable, but the rejection and the heavier WLE lower tail remain
relevant to the sensitivity comparison below.

A more consequential check is the comparison of computations. The last
block of \cref{tab:normality} compares the parametric normal-tail formula, the
observed finite-corpus share, and the empirical share of Gaussian BLUPs. The
observed share is noisy but assumption-light; the BLUP share is explicitly
Gaussian empirical Bayes. At $c = .80$ the
three agree within 2.1 percentage points, giving 30.3\%, 29.1\%, and 28.7\%
for EAP and 51.8\%, 49.7\%, and 49.7\% for WLE. The largest
normal-versus-observed discrepancy is 3.91 points, for WLE below .70. The
small noise share explains the broad agreement but does not force every tail
method to return the same number. We use the parametric model for its
closed-form thresholds and variance decomposition and label the BLUP results
as a Gaussian empirical-Bayes diagnostic.

\input{floats/tabS_normality}

\subsection{A flexible non-Gaussian deconvolution}\label{app:gmodel}

The comparisons above hold the estimator fixed and vary the assumption. A
stronger check varies the method. We re-estimated the mixing distribution
with the $g$-modeling approach of \textcite{efron_deconvolution_2016}, which
represents $g$ as a low-dimensional exponential family over a discrete grid,
here natural splines with five degrees of freedom estimated by penalized
maximum likelihood, rather than assuming normality. Because our sampling
variances are heteroscedastic, we build the likelihood matrix directly,
$P_{ij} = \phi(\hat L_i; \theta_j, \sqrt{v_i})$, and integrate the fitted $g$
above $\ln(1-c)$.

The baseline fit uses a 61-point grid, five natural-spline degrees of
freedom, and penalty 1. We also vary the grid over 41, 61, 101, and 201
points and cross spline degrees of freedom 3, 5, and 7 with penalties 0, 1,
and 5. The fitted $g$ is close to normal but not exactly normal, with skewness
$-0.28$ and kurtosis 3.67 for EAP and $-0.26$ and 4.46 for WLE, against 0 and
3 under the fitted normal, and with means and standard deviations that agree
with the REML fit to three decimals. The consequences are small and
concentrated where the extra kurtosis matters:

\input{floats/tabS_gmodel}

At the headline threshold the two methods agree to 0.1 points for EAP and
1.3 points for WLE. They diverge materially in one place only, the WLE share
below .70, where the normal model reports 30.3\% against the flexible spline
26.0\%. Since the $g$-model tracks the observed share of 26.4\% closely
there and the normal model does not, the gap of 4.3 points should be read as
the normal tail being slightly too heavy rather than as disagreement about
the data. We report the normal-based figures throughout for their closed
form and their variance decomposition, and we flag the WLE sub-.70 share as
the one number in the paper that is sensitive to the mixing assumption by
more than two points.

Our known-variance assumption treats the bootstrap $v_{si}$ as exact, and
\cref{app:ses} quantifies the one direction of concern, the
calibration-fixed WLE variances, showing that the conclusions survive
doubling them. Finally, the threshold statistics sit near the center of the
fitted distributions ($|z| < 0.2$ at $c = .80$), where they are insensitive to modest misestimation of $\hat\tau$. Shrinking
$\hat\tau$ by 10\% moves the WLE true sub-.80 share by less than one
percentage point.

%% file: floats/tabS_deconv.tex
\begin{table}[H]
\centering
\caption{Deconvolution of the observed reliability distribution: three-level random-effects fit on $L=\ln(1-r)$ and implied true shares below conventional thresholds.}
\label{tab:deconv}
\footnotesize
\begin{tabular}{lrrrr}
\toprule
 & \multicolumn{2}{c}{EAP (empirical)} & \multicolumn{2}{c}{WLE (separation)} \\
\cmidrule(lr){2-3}\cmidrule(lr){4-5}
 & Observed & True & Observed & True \\
\midrule
Units in model & \multicolumn{2}{c}{877} & \multicolumn{2}{c}{865} \\
Pooled $r$ & \multicolumn{2}{c}{0.862} & \multicolumn{2}{c}{0.794} \\
$\tau^2$ total (study / unit) & \multicolumn{2}{c}{0.519 (0.201 / 0.318)} & \multicolumn{2}{c}{0.522 (0.181 / 0.342)} \\
Noise share, moment ratio & \multicolumn{2}{c}{1.00\%} & \multicolumn{2}{c}{0.77\%} \\
Noise share, model ratio & \multicolumn{2}{c}{0.95\%} & \multicolumn{2}{c}{0.76\%} \\
\addlinespace
Share $r<.90$ & 68.6\% & 67.2\% & 86.7\% & 84.2\% \\
Share $r<.80$ & 29.1\% & 30.3\% & 49.7\% & 51.8\% \\
Share $r<.70$ & 13.1\% & 14.0\% & 26.4\% & 30.3\% \\
Share $r<.50$ & 3.3\% & 3.7\% & 10.1\% & 11.1\% \\
\bottomrule
\end{tabular}
\end{table}

%% file: floats/tabS_normality.tex
\begin{table}[H]
\centering
\caption{Diagnostics for the normal mixing assumption on $L=\ln(1-r)$. Top block: shape of the observed $L$ distribution. Bottom block contrasts three distinct quantities: the fitted Gaussian random-effects tail, the observed noisy share, and the empirical share of Gaussian-model posterior means (BLUPs). The latter is a shrinkage diagnostic, not an assumption-free estimate of the latent distribution.}
\label{tab:normality}
\footnotesize
\begin{tabular}{lrrrrrr}
\toprule
 & \multicolumn{3}{c}{EAP (empirical)} & \multicolumn{3}{c}{WLE (separation)} \\
\cmidrule(lr){2-4}\cmidrule(lr){5-7}
\midrule
Units in model & \multicolumn{3}{c}{877} & \multicolumn{3}{c}{865} \\
Skewness & \multicolumn{3}{c}{-0.16} & \multicolumn{3}{c}{-0.11} \\
Kurtosis & \multicolumn{3}{c}{3.16} & \multicolumn{3}{c}{3.79} \\
Shapiro--Wilk $W$ ($p$) & \multicolumn{3}{c}{0.9972 (0.14)} & \multicolumn{3}{c}{0.9915 (7.1e-05)} \\
\addlinespace
\multicolumn{7}{l}{\emph{True share below $c$, computed three ways}} \\
 & Normal & Observed & BLUP & Normal & Observed & BLUP \\
\cmidrule(lr){2-4}\cmidrule(lr){5-7}
\quad $c = .90$ & 67.2\% & 68.6\% & 68.6\% & 84.2\% & 86.7\% & 86.9\% \\
\quad $c = .80$ & 30.3\% & 29.1\% & 28.7\% & 51.8\% & 49.7\% & 49.7\% \\
\quad $c = .70$ & 14.0\% & 13.1\% & 12.7\% & 30.3\% & 26.4\% & 26.2\% \\
\quad $c = .50$ & 3.7\% & 3.3\% & 3.3\% & 11.1\% & 10.1\% & 10.1\% \\
\bottomrule
\end{tabular}
\end{table}

%% file: floats/tabS_gmodel.tex
\begin{table}[H]
\centering
\caption{Flexible spline $g$-model deconvolution. The displayed fit uses a 61-point grid, five natural-spline degrees of freedom, penalty 1, and a heteroscedastic Gaussian observation kernel. It is flexible but not assumption-free. Across 41--201 grid points, the sub-.80 estimates range 28.12--30.15\% (EAP) and 50.49--53.06\% (WLE); across df/penalty settings they range 29.48--31.45\% and 50.05--52.09\%.}
\label{tab:gmodel}
\footnotesize
\begin{tabular}{lrrrrrr}
\toprule
 & \multicolumn{3}{c}{EAP (empirical)} & \multicolumn{3}{c}{WLE (separation)} \\
\cmidrule(lr){2-4}\cmidrule(lr){5-7}
Threshold & Normal & $g$-model & Observed & Normal & $g$-model & Observed \\
\midrule
$c = .90$ & 67.2\% & 68.7\% & 68.6\% & 84.2\% & 86.9\% & 86.7\% \\
$c = .80$ & 30.3\% & 30.2\% & 29.1\% & 51.8\% & 50.5\% & 49.7\% \\
$c = .70$ & 14.0\% & 13.5\% & 13.1\% & 30.3\% & 26.0\% & 26.4\% \\
$c = .50$ & 3.7\% & 3.4\% & 3.3\% & 11.1\% & 10.3\% & 10.1\% \\
\bottomrule
\end{tabular}
\end{table}

%% file: appendices/appendix_F.tex
\section{Meta-regression inference, sensitivity, and covariate
audit}\label{app:metareg}

\subsection{The CHE working model and CR2 inference}

Units within a study frequently share respondents, typically as subscales of
one battery administered to one sample, so their sampling errors are
correlated with unknown correlation. Following the
correlated-and-hierarchical-effects approach, we impute a working
within-study correlation $\rho = 0.6$ into a block-diagonal sampling
covariance matrix with blocks by study, retain the two random-effect layers
of the deconvolution model, and estimate by restricted maximum likelihood.
All reported inference uses CR2 cluster-robust variance estimation with
Satterthwaite degrees of freedom, clustering on study
\parencite{hedges_robust_2010, satterthwaite_approximate_1946,
pustejovsky_small_2018, pustejovsky_clubsandwich_2026}. With independent study
clusters, adequate cluster count, and the usual CR2 regularity conditions,
this makes inference less sensitive to the imputed $\rho$ and to working-model
misspecification; it does not remove every dependence assumption. In
\cref{tab:rho} we verify practical stability empirically for the displayed
focal coefficients: each moves by at most .001 across
$\rho \in \{0.3, 0.6, 0.9\}$.

\input{floats/tabS_rho}

\subsection{Moderator tiers}

Moderators were fixed in advance in three tiers. A confirmatory tier,
reported in \cref{tab:metareg} of the main text, contains only design
descriptors knowable before any model is fitted, namely log items, log
persons, response categories, construct type, and longitudinal origin. A
mediator tier contains quantities that are mechanically posterior to the
calibration, namely the unidimensionality ratio $\lambda_1/\lambda_2$, the
ratio of the first two eigenvalues of the item correlation matrix, and the
model-family assignment, dichotomous Rasch versus polytomous GPCM. These sit
on the causal path from design to reliability and would over-control the
confirmatory contrasts. An exploratory tier contains the remaining
harvestable descriptors: sample frame, a sparse-design flag, provenance
grade, and administration policy. Only the second and third tiers enter the
audit below.

\subsection{Explained heterogeneity}

Relative to the intercept-only model with the same working covariance, the
confirmatory moderators explain 35\% of total heterogeneity under EAP and
19\% under WLE. Most of the heterogeneity, and especially its within-study
component, remains unexplained by design descriptors, which is the
quantitative form of the recommendation that reliability be measured rather
than predicted.

\subsection{The implementation/specification-gap test}

\input{floats/tabS_delta}

Because the two columns of \cref{tab:metareg} in the main text are fitted to
the same units, comparing their significance patterns is not a test of
implementation/specification-gap moderation
\parencite{gelman_difference_2006}. In \cref{tab:delta} we report the direct
test, in which within-unit differences $\Delta_i =
L^{\mathrm{WLE}}_i - L^{\mathrm{EAP}}_i$ are regressed on the confirmatory
moderators with CR2 study-clustered errors ($n=875$ units in 430 study
clusters). Differencing absorbs additive random effects shared exactly by the
two outputs, but it does not eliminate package, calibration, slope-model, or
potentially unequal sampling-error components. The regression therefore
describes moderation of the implemented WLE--EAP gap, not estimand alone. As
in the main text, the construct
factor is jointly significant (HTZ $F(8, 25.3) = 2.72$, $p = .026$), and the
personality, behavioral, and opinion--attitude contrasts survive Holm
correction while the affective contrast does not. Two further design results
follow from the same regression. First, the implemented gap grows with
response categories, at $+.047$ per category with $p < .001$; this can reflect
shrinkage together with the package and slope-model contrast in polytomous
instruments. Second, the
log-persons slope, which is null within the small-sample stratum for both
estimands ($b = .008$ for EAP and $.020$ for WLE, both non-significant),
identifies the pooled positive coefficient of \cref{tab:metareg} as a
between-stratum composition contrast rather than as a within-stratum
gradient. Note that coefficients in this subsection come from the unweighted
CR2 refit and may differ from the REML fits of \cref{tab:metareg} in the
third decimal.

\subsection{The residual-covariate audit}

\input{floats/tabS_sens}

In \cref{tab:audit} we report the audit, in which every exploratory and
mediator covariate is added simultaneously to the confirmatory model. Three
results stand out. First, the null results that matter are clean: sample
frame, whether clinical, general, internet, or targeted, moves nothing, and
the provisionally ingested collection is indistinguishable from the
canonical one ($\hat\beta = -.013$ under EAP), which supports pooling the
corpus. Second,
sparse planned-missing designs carry a genuine penalty under EAP ($+.256$,
$p < .001$). Third, the two mediators dominate. The unidimensionality ratio
is the strongest single correlate in the entire analysis ($-.465$ for EAP and
$-.483$ for WLE per log unit, both $p < .0001$), and the model-family
indicator is large ($+.686$ for EAP), reflecting the confound between
dichotomous formats and the corpus's cognitive tests.

\begin{figure}[H]
\centering
\includegraphics[width=0.9\textwidth]{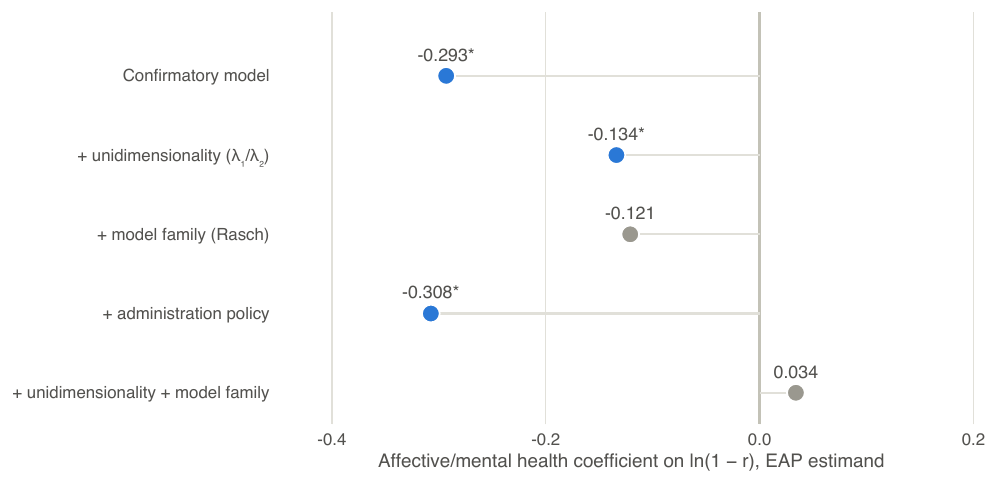}
\caption{Mediation decomposition of the EAP affective-advantage coefficient.
Adding the unidimensionality ratio or the model family individually halves
the contrast, and adding both eliminates it. Administration policy, included
as a placebo control, leaves it untouched. Filled points mark $p<.05$.}
\label{fig:mediation}
\end{figure}

This audit reframes the construct-type results of the main text. The EAP
affective advantage of $-.293$ behaves like a composition effect, being
halved by conditioning on unidimensionality ($-.134$), halved by model
family ($-.121$, non-significant), untouched by the placebo ($-.307$), and
eliminated by the two jointly ($+.034$, non-significant;
\cref{fig:mediation}). Affective screeners are thus more reliable because
they are strongly one-dimensional polytomous instruments, and not because of
their content domain as such. By contrast, the WLE personality penalty is
not explained away and in fact strengthens slightly under the full audit
block, from $+.248$ to $+.300$ with $p = .008$. That pattern is consistent
with an information-curvature mechanism, in which narrow measurement bands
are weighted heavily by the unbounded $\wbar$ integrand, rather than with a
composition story. We report the confirmatory total-effect estimates in the
main text and the decomposition here, because conditioning on mediators
would answer a different question.

%% file: floats/tabS_rho.tex
\begin{table}[H]
\centering
\caption{Sensitivity of the CHE meta-regression to the imputed within-study sampling correlation $\rho$. Coefficients on $\ln(1-r)$; reference construct Cognitive/educational. Under the stated CR2 regularity conditions, the displayed focal coefficients are practically stable: each moves by at most .001 across the working correlations.}
\label{tab:rho}
\footnotesize
\begin{tabular}{llrrrrr}
\toprule
Estimand & $\rho$ & log(items) & log(persons) & Categories & Affective & Personality \\
\midrule
EAP & 0.3 & $-0.385$ & $+0.053$ & $-0.127$ & $-0.293$ & $-0.149$ \\
EAP & 0.6 & $-0.385$ & $+0.053$ & $-0.127$ & $-0.293$ & $-0.149$ \\
EAP & 0.9 & $-0.385$ & $+0.053$ & $-0.127$ & $-0.293$ & $-0.149$ \\
WLE & 0.3 & $-0.340$ & $+0.054$ & $-0.083$ & $-0.204$ & $+0.248$ \\
WLE & 0.6 & $-0.341$ & $+0.054$ & $-0.083$ & $-0.204$ & $+0.248$ \\
WLE & 0.9 & $-0.341$ & $+0.055$ & $-0.083$ & $-0.204$ & $+0.248$ \\
\bottomrule
\end{tabular}
\end{table}

%% file: floats/tabS_delta.tex
\begin{table}[H]
\centering
\caption{Implementation/specification-gap moderation: within-unit differences $\Delta_i = L_i^{\mathrm{WLE}} - L_i^{\mathrm{EAP}}$ regressed on the confirmatory moderators, with CR2 study-clustered standard errors ($n=875$ units, 430 clusters). Because the WLE and EAP branches also differ in software, calibration, and slope model for polytomous data, these coefficients do not isolate a pure estimand effect. Construct contrasts carry Holm-adjusted $p$ values (family of 8). Omnibus construct test: HTZ $F(8, 25.3) = 2.72$, $p = 0.026$.}
\label{tab:delta}
\footnotesize
\begin{tabular}{lcrrr}
\toprule
Term & $\hat\beta$ (SE) & df & $p$ & $p_{\mathrm{Holm}}$ \\
\midrule
log(items) & $+0.042$ (0.018) & 76 & 0.0197 & --- \\
log(persons) & $-0.005$ (0.007) & 95 & 0.4705 & --- \\
Response categories & $+0.047$ (0.008) & 46 & 0.0000 & --- \\
Longitudinal & $+0.069$ (0.037) & 24 & 0.0735 & --- \\
\addlinespace
\multicolumn{5}{l}{\itshape Construct type (ref.\ Cognitive/educational)} \\
\quad Affective/mental health & $+0.074$ (0.042) & 86 & 0.0825 & 0.3298 \\
\quad Behavioral & $+0.186$ (0.063) & 37 & 0.0050 & 0.0302 \\
\quad Developmental & $+0.036$ (0.046) & 2 & 0.5050 & 1.0000 \\
\quad Opinion/attitude & $+0.133$ (0.043) & 62 & 0.0030 & 0.0209 \\
\quad Other & $+0.092$ (0.140) & 4 & 0.5482 & 1.0000 \\
\quad Personality & $+0.324$ (0.085) & 38 & 0.0005 & 0.0039 \\
\quad Physical health/functioning & $+0.048$ (0.062) & 13 & 0.4582 & 1.0000 \\
\quad Unclassified & $+0.108$ (0.044) & 79 & 0.0158 & 0.0792 \\
\bottomrule
\end{tabular}
\end{table}

%% file: floats/tabS_sens.tex
\begin{table}[H]
\centering
\caption{Residual-covariate audit: confirmatory coefficients before and after adding every remaining harvestable dataset property (sample frame, sparse design, unidimensionality, provenance grade, model family, administration policy). Stars: $^{*}p<.05$, $^{**}p<.01$, $^{***}p<.001$ (CR2). Unidimensionality and model family are mechanically posterior to the fit (mediators), which is why they are excluded from the confirmatory model.}
\label{tab:audit}
\footnotesize
\setlength{\tabcolsep}{5pt}
\begin{tabular}{lcccc}
\toprule
 & \multicolumn{2}{c}{EAP} & \multicolumn{2}{c}{WLE} \\
\cmidrule(lr){2-3}\cmidrule(lr){4-5}
Term & Confirmatory & +Audit block & Confirmatory & +Audit block \\
\midrule
log(items) & $-0.385$$^{***}$ & $-0.526$$^{***}$ & $-0.341$$^{***}$ & $-0.476$$^{***}$ \\
log(persons) & $+0.053$$^{***}$ & $+0.049$$^{***}$ & $+0.054$$^{***}$ & $+0.052$$^{***}$ \\
Response categories & $-0.127$$^{***}$ & $-0.060$$^{***}$ & $-0.083$$^{***}$ & $-0.032$$^{*}$ \\
Longitudinal & $+0.029$ & $+0.060$ & $+0.114$ & $+0.189$ \\
Affective/mental health & $-0.293$$^{***}$ & $+0.014$ & $-0.204$$^{*}$ & $+0.072$ \\
Personality & $-0.149$ & $-0.012$ & $+0.248$$^{*}$ & $+0.301$$^{**}$ \\
Sample: clinical & --- & $+0.086$ & --- & $-0.015$ \\
Sample: general & --- & $+0.010$ & --- & $+0.008$ \\
Sample: internet & --- & $-0.013$ & --- & $+0.042$ \\
Sample: targeted/other & --- & $-0.017$ & --- & $+0.076$ \\
Sample: unknown & --- & $-0.147$ & --- & $-0.153$ \\
Sparse design (density $<0.9$) & --- & $+0.256$$^{***}$ & --- & $+0.210$$^{**}$ \\
log($\lambda_1/\lambda_2$) & --- & $-0.465$$^{***}$ & --- & $-0.483$$^{***}$ \\
Provisional provenance (grade B) & --- & $-0.013$ & --- & $-0.055$ \\
Dichotomous/Rasch family & --- & $+0.686$$^{***}$ & --- & $+0.565$$^{***}$ \\
Admin.: single & --- & $-0.214$$^{**}$ & --- & $-0.205$$^{*}$ \\
Admin.: wave selected & --- & $-0.419$$^{**}$ & --- & $-0.492$$^{**}$ \\
\bottomrule
\end{tabular}
\end{table}

%% file: appendices/appendix_G.tex
\section{Additional results}\label{app:additional}

\subsection{Reliability by construct type}

\begin{figure}[H]
\centering
\includegraphics[width=0.95\textwidth]{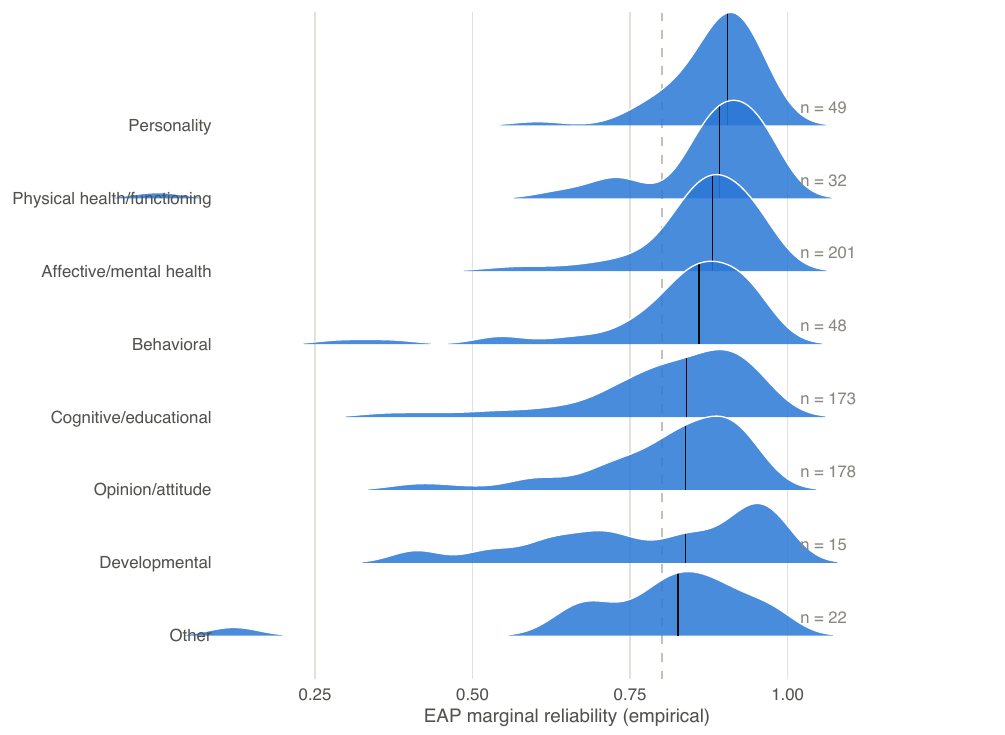}
\caption{EAP empirical reliability by construct type, using warehouse
metadata and omitting unclassified units. Ridgelines are kernel densities
with the median marked, ordered by median. Every construct family carries a
substantial sub-.80 tail, and the apparent advantage of affective and
mental-health scales is decomposed in \cref{fig:mediation}.}
\label{fig:construct}
\end{figure}

In \cref{fig:construct} we display the distribution within each construct
family under the EAP estimand. Two features go beyond the meta-regression
coefficients. First, within-family dispersion is much larger than
between-family differences in location, which is consistent with the variance
decomposition ($\tau^2_{\mathrm{unit}} > \tau^2_{\mathrm{study}}$). Second,
no family is exempt from the low tail. Even personality measurement, the top
family by EAP median, places roughly one unit in eight below .80, and the
ordering of families is itself estimand-dependent, since under WLE the
personality family drops toward the bottom (\cref{tab:metareg}).

\subsection{Full specification-curve table}

\input{floats/tabS_spec}

\Cref{tab:specfull} tabulates \cref{fig:spec}, adding confidence intervals
and the common-subset estimates. Maximal-sample and common-subset columns
agree to within .010 in pooled reliability for every estimator, the largest
movement being WLE separation from .794 to .803, and the six estimators
arrange in the same order in both columns. Coverage composition is therefore
ruled out as an explanation for the estimator spread.

\subsection{Composition by size stratum}

\input{floats/tabS_strata}

\Cref{tab:strata} tabulates \cref{fig:composition}. Under both co-primary estimands the second completion wave, the 101 largest
datasets of which 94 and 100 carry standard errors under EAP and WLE, is
less reliable than the first wave, with pooled $r$ of .848 against .865
under EAP and .773 against .799 under WLE, so completing the corpus raised
the full-corpus sub-.80 shares. Composition explains the direction, since
the warehouse's largest datasets are internet panels and national surveys
carrying short embedded scales, such as values and attitude modules, rather
than professionally constructed achievement tests. This composition is also
the source of the positive log-persons coefficient in the meta-regression.

%% file: floats/tabS_spec.tex
\begin{table}[H]
\centering
\caption{The estimator-sensitivity analysis in full: pooled reliability and true sub-threshold shares for each estimator, on its maximal available sample and on the common subset of 623 units carrying standard errors for all six estimators (per-estimator domain filters leave 621--623 effective, as the $n$ column shows).}
\label{tab:specfull}
\footnotesize
\begin{tabular}{llrrcrr}
\toprule
Set & Estimator & $n$ & Pooled $r$ & 95\% CI & True $<.80$ & True $<.70$ \\
\midrule
Maximal $n$ & WLE (separation) & 865 & 0.794 & [0.780, 0.806] & 51.8\% & 30.3\% \\
Maximal $n$ & Cronbach's $\alpha$ & 647 & 0.843 & [0.831, 0.854] & 37.1\% & 18.8\% \\
Maximal $n$ & Guttman $\lambda_2$ & 647 & 0.854 & [0.843, 0.864] & 33.2\% & 15.9\% \\
Maximal $n$ & TRC & 807 & 0.860 & [0.849, 0.871] & 33.2\% & 17.6\% \\
Maximal $n$ & EAP (empirical) & 877 & 0.862 & [0.853, 0.870] & 30.3\% & 14.0\% \\
Maximal $n$ & MRC & 805 & 0.868 & [0.860, 0.875] & 26.0\% & 10.2\% \\
Common subset & WLE (separation) & 621 & 0.803 & [0.789, 0.816] & 49.0\% & 26.2\% \\
Common subset & Cronbach's $\alpha$ & 623 & 0.844 & [0.832, 0.855] & 36.6\% & 18.3\% \\
Common subset & Guttman $\lambda_2$ & 623 & 0.854 & [0.843, 0.864] & 32.9\% & 15.6\% \\
Common subset & TRC & 623 & 0.862 & [0.850, 0.874] & 32.3\% & 16.9\% \\
Common subset & EAP (empirical) & 623 & 0.862 & [0.852, 0.871] & 28.8\% & 12.1\% \\
Common subset & MRC & 623 & 0.865 & [0.855, 0.874] & 27.1\% & 10.7\% \\
\bottomrule
\end{tabular}
\end{table}

%% file: floats/tabS_strata.tex
\begin{table}[H]
\centering
\caption{Observed and deconvolved true shares by completion wave. Small/Medium denotes the 788 first-wave datasets and Large/XL the 101 second-wave datasets (all 75 XL-tier plus the 26 largest medium-tier; Table 1); the $n$ column counts the datasets in each wave carrying a usable standard error for the given estimator, 94 and 100 of the second wave under EAP and WLE. The second wave is less reliable than the first, so completing the corpus raised the full-corpus sub-.80 shares.}
\label{tab:strata}
\footnotesize
\resizebox{\textwidth}{!}{%
\begin{tabular}{llrrrrrr}
\toprule
Estimand & Subset & $n$ & Pooled $r$ & Obs.\ $<.80$ & True $<.80$ & True $<.70$ & True $<.50$ \\
\midrule
EAP (empirical) & Small/Medium & 783 & 0.865 & 27.6\% & 28.9\% & 13.0\% & 3.2\% \\
EAP (empirical) & Large/XL & 94 & 0.848 & 41.5\% & 36.9\% & 20.5\% & 7.5\% \\
EAP (empirical) & Full corpus & 877 & 0.862 & 29.1\% & 30.3\% & 14.0\% & 3.7\% \\
WLE (separation) & Small/Medium & 765 & 0.799 & 47.2\% & 50.3\% & 29.0\% & 10.4\% \\
WLE (separation) & Large/XL & 100 & 0.773 & 69.0\% & 56.6\% & 35.7\% & 14.9\% \\
WLE (separation) & Full corpus & 865 & 0.794 & 49.7\% & 51.8\% & 30.3\% & 11.1\% \\
\bottomrule
\end{tabular}%
}
\end{table}

%% file: references.bib
@article{domingue_introduction_2025,
	author = {Domingue, Benjamin W. and Braginsky, Mika and Caffrey-Maffei, Lucy and Gilbert, Joshua B. and Kanopka, Klint and Kapoor, Radhika and Lee, Hansol and Liu, Yiqing and Nadela, Savira and Pan, Guanzhong and Zhang, Lijin and Zhang, Susu and Frank, Michael C.},
	title = {An introduction to the item response warehouse ({IRW}): {A} resource for enhancing data usage in psychometrics},
	journal = {Behavior Research Methods},
	year = {2025},
	volume = {57},
	number = {10},
	pages = {276},
	doi = {10.3758/s13428-025-02796-y},
	issn = {1554-3528}
}

@misc{lee_reliability_2025,
	author = {Lee, JoonHo},
	title = {Reliability-targeted simulation of item response data: Solving the inverse design problem},
	year = {2025},
	howpublished = {arXiv preprint},
	note = {arXiv:2512.16012}
}

@book{lord_novick_1968,
	author = {Lord, Frederic M. and Novick, Melvin R.},
	title = {Statistical theories of mental test scores},
	year = {1968},
	publisher = {Addison-Wesley},
	address = {Reading, MA}
}

@article{cronbach_coefficient_1951,
	author = {Cronbach, Lee J.},
	title = {Coefficient alpha and the internal structure of tests},
	journal = {Psychometrika},
	year = {1951},
	volume = {16},
	number = {3},
	pages = {297--334},
	doi = {10.1007/BF02310555}
}

@article{guttman_basis_1945,
	author = {Guttman, Louis},
	title = {A basis for analyzing test-retest reliability},
	journal = {Psychometrika},
	year = {1945},
	volume = {10},
	number = {4},
	pages = {255--282},
	doi = {10.1007/BF02288892}
}

@book{mcdonald_test_1999,
	author = {McDonald, Roderick P.},
	title = {Test theory: {A} unified treatment},
	year = {1999},
	publisher = {Lawrence Erlbaum},
	address = {Mahwah, NJ}
}

@article{sijtsma_use_2009,
	author = {Sijtsma, Klaas},
	title = {On the use, the misuse, and the very limited usefulness of {Cronbach's} alpha},
	journal = {Psychometrika},
	year = {2009},
	volume = {74},
	number = {1},
	pages = {107--120},
	doi = {10.1007/s11336-008-9101-0}
}

@article{mcneish_thanks_2018,
	author = {McNeish, Daniel},
	title = {Thanks coefficient alpha, we'll take it from here},
	journal = {Psychological Methods},
	year = {2018},
	volume = {23},
	number = {3},
	pages = {412--433},
	doi = {10.1037/met0000144}
}

@article{revelle_reliability_2019,
	author = {Revelle, William and Condon, David M.},
	title = {Reliability from $\alpha$ to $\omega$: {A} tutorial},
	journal = {Psychological Assessment},
	year = {2019},
	volume = {31},
	number = {12},
	pages = {1395--1411},
	doi = {10.1037/pas0000754},
	issn = {1939-134X}
}

@book{nunnally_psychometric_1994,
	author = {Nunnally, Jum C. and Bernstein, Ira H.},
	title = {Psychometric theory},
	edition = {3rd},
	year = {1994},
	publisher = {McGraw-Hill},
	address = {New York}
}

@article{cronbach_signalnoise_1964,
	author = {Cronbach, Lee J. and Gleser, Goldine C.},
	title = {The signal/noise ratio in the comparison of reliability coefficients},
	journal = {Educational and Psychological Measurement},
	year = {1964},
	volume = {24},
	number = {3},
	pages = {467--480},
	doi = {10.1177/001316446402400303},
	issn = {0013-1644}
}

@article{warm_weighted_1989,
	author = {Warm, Thomas A.},
	title = {Weighted likelihood estimation of ability in item response theory},
	journal = {Psychometrika},
	year = {1989},
	volume = {54},
	number = {3},
	pages = {427--450},
	doi = {10.1007/BF02294627}
}

@article{bock_adaptive_1982,
	author = {Bock, R. Darrell and Mislevy, Robert J.},
	title = {Adaptive {EAP} estimation of ability in a microcomputer environment},
	journal = {Applied Psychological Measurement},
	year = {1982},
	volume = {6},
	number = {4},
	pages = {431--444},
	doi = {10.1177/014662168200600405},
	issn = {0146-6216}
}

@article{green_technical_1984,
	author = {Green, Bert F. and Bock, R. Darrell and Humphreys, Lloyd G. and Linn, Robert L. and Reckase, Mark D.},
	title = {Technical guidelines for assessing computerized adaptive tests},
	journal = {Journal of Educational Measurement},
	year = {1984},
	volume = {21},
	number = {4},
	pages = {347--360},
	doi = {10.1111/j.1745-3984.1984.tb01039.x}
}

@article{adams_reliability_2005,
	author = {Adams, Raymond J.},
	title = {Reliability as a measurement design effect},
	journal = {Studies in Educational Evaluation},
	year = {2005},
	volume = {31},
	number = {2-3},
	pages = {162--172},
	doi = {10.1016/j.stueduc.2005.05.008},
	issn = {0191-491X}
}

@book{wright_masters_1982,
	author = {Wright, Benjamin D. and Masters, Geofferey N.},
	title = {Rating scale analysis: {Rasch} measurement},
	year = {1982},
	publisher = {MESA Press},
	address = {Chicago}
}

@article{kim_note_2012,
	author = {Kim, Seonghoon},
	title = {A note on the reliability coefficients for item response model-based ability estimates},
	journal = {Psychometrika},
	year = {2012},
	volume = {77},
	number = {1},
	pages = {153--162},
	doi = {10.1007/s11336-011-9238-0},
	issn = {0033-3123}
}

@article{kim_estimation_2010,
	author = {Kim, Seonghoon and Feldt, Leonard S.},
	title = {The estimation of the {IRT} reliability coefficient and its lower and upper bounds, with comparisons to {CTT} reliability statistics},
	journal = {Asia Pacific Education Review},
	year = {2010},
	volume = {11},
	number = {2},
	pages = {179--188},
	doi = {10.1007/s12564-009-9062-8},
	issn = {1876-407X}
}

@article{cheng_comparison_2012,
	author = {Cheng, Ying and Yuan, Ke-Hai and Liu, Cheng},
	title = {Comparison of reliability measures under factor analysis and item response theory},
	journal = {Educational and Psychological Measurement},
	year = {2012},
	volume = {72},
	number = {1},
	pages = {52--67},
	doi = {10.1177/0013164411407315},
	issn = {0013-1644}
}

@inproceedings{zhang_empirical_2006,
	author = {Zhang, Yanwei and Breithaupt, Krista and Tessema, Aster and Chuah, David},
	title = {Empirical vs.\ expected {IRT}-based reliability estimation in computerized multistage testing ({MST})},
	booktitle = {Paper presented at the Annual Meeting of the National Council on Measurement in Education},
	year = {2006},
	address = {San Francisco, CA},
	note = {ERIC ED498534}
}

@article{yang_characterizing_2012,
	author = {Yang, Ji Seung and Hansen, Mark and Cai, Li},
	title = {Characterizing sources of uncertainty in item response theory scale scores},
	journal = {Educational and Psychological Measurement},
	year = {2012},
	volume = {72},
	number = {2},
	pages = {264--290},
	doi = {10.1177/0013164411410056},
	issn = {0013-1644}
}

@article{doran_information_2005,
	author = {Doran, Harold C.},
	title = {The information function for the one-parameter logistic model: Is it reliability?},
	journal = {Educational and Psychological Measurement},
	year = {2005},
	volume = {65},
	number = {5},
	pages = {665--675},
	doi = {10.1177/0013164404272500},
	issn = {0013-1644}
}

@article{soland_how_2024,
	author = {Soland, James and Kuhfeld, Megan and Edwards, Kelly},
	title = {How survey scoring decisions can influence your study's results: {A} trip through the {IRT} looking glass},
	journal = {Psychological Methods},
	year = {2024},
	volume = {29},
	number = {5},
	pages = {1003--1024},
	doi = {10.1037/met0000506},
	issn = {1082-989X}
}

@article{ark_standard_2025,
	author = {{van der Ark}, L. Andries},
	title = {Standard errors for reliability coefficients},
	journal = {Psychometrika},
	year = {2025},
	volume = {90},
	number = {5},
	pages = {1679--1704},
	doi = {10.1017/psy.2025.10050},
	issn = {0033-3123}
}

@article{andersson_large_2018,
	author = {Andersson, Björn and Xin, Tao},
	title = {Large sample confidence intervals for item response theory reliability coefficients},
	journal = {Educational and Psychological Measurement},
	year = {2018},
	volume = {78},
	number = {1},
	pages = {32--45},
	doi = {10.1177/0013164417713570},
	issn = {0013-1644}
}

@misc{andersson_irtreliability_2022,
	author = {Andersson, Björn},
	title = {irtreliability: Item response theory reliability},
	year = {2022},
	howpublished = {R package version 0.1-1},
	doi = {10.32614/CRAN.package.irtreliability}
}

@article{vachahaase_reliability_1998,
	author = {Vacha-Haase, Tammi},
	title = {Reliability generalization: {Exploring} variance in measurement error affecting score reliability across studies},
	journal = {Educational and Psychological Measurement},
	year = {1998},
	volume = {58},
	number = {1},
	pages = {6--20},
	doi = {10.1177/0013164498058001002}
}

@article{thompson_psychometrics_2000,
	author = {Thompson, Bruce and Vacha-Haase, Tammi},
	title = {Psychometrics is datametrics: {The} test is not reliable},
	journal = {Educational and Psychological Measurement},
	year = {2000},
	volume = {60},
	number = {2},
	pages = {174--195},
	doi = {10.1177/0013164400602002}
}

@article{sanchezmeca_recommended_2013,
	author = {S{\'a}nchez-Meca, Julio and L{\'o}pez-L{\'o}pez, Jos{\'e} A. and L{\'o}pez-Pina, Jos{\'e} A.},
	title = {Some recommended statistical analytic practices when reliability generalization studies are conducted},
	journal = {British Journal of Mathematical and Statistical Psychology},
	year = {2013},
	volume = {66},
	number = {3},
	pages = {402--425},
	doi = {10.1111/j.2044-8317.2012.02057.x}
}

@article{flake_measurement_2020,
	author = {Flake, Jessica Kay and Fried, Eiko I.},
	title = {Measurement schmeasurement: {Questionable} measurement practices and how to avoid them},
	journal = {Advances in Methods and Practices in Psychological Science},
	year = {2020},
	volume = {3},
	number = {4},
	pages = {456--465},
	doi = {10.1177/2515245920952393}
}

@article{hedge_reliability_2018,
	author = {Hedge, Craig and Powell, Georgina and Sumner, Petroc},
	title = {The reliability paradox: {Why} robust cognitive tasks do not produce reliable individual differences},
	journal = {Behavior Research Methods},
	year = {2018},
	volume = {50},
	number = {3},
	pages = {1166--1186},
	doi = {10.3758/s13428-017-0935-1}
}

@article{parsons_psychological_2019,
	author = {Parsons, Sam and Kruijt, Anne-Wil and Fox, Elaine},
	title = {Psychological science needs a standard practice of reporting the reliability of cognitive-behavioral measurements},
	journal = {Advances in Methods and Practices in Psychological Science},
	year = {2019},
	volume = {2},
	number = {4},
	pages = {378--395},
	doi = {10.1177/2515245919879695}
}

@article{loken_measurement_2017,
	author = {Loken, Eric and Gelman, Andrew},
	title = {Measurement error and the replication crisis},
	journal = {Science},
	year = {2017},
	volume = {355},
	number = {6325},
	pages = {584--585},
	doi = {10.1126/science.aal3618}
}

@article{rouder_hierarchical_2024,
	author = {Rouder, Jeffrey N. and Mehrvarz, Mahbod},
	title = {Hierarchical-model insights for planning and interpreting individual-difference studies of cognitive abilities},
	journal = {Current Directions in Psychological Science},
	year = {2024},
	volume = {33},
	number = {2},
	pages = {128--135},
	doi = {10.1177/09637214231220923},
	issn = {0963-7214}
}

@article{steegen_increasing_2016,
	author = {Steegen, Sara and Tuerlinckx, Francis and Gelman, Andrew and Vanpaemel, Wolf},
	title = {Increasing transparency through a multiverse analysis},
	journal = {Perspectives on Psychological Science},
	year = {2016},
	volume = {11},
	number = {5},
	pages = {702--712},
	doi = {10.1177/1745691616658637}
}

@article{simonsohn_specification_2020,
	author = {Simonsohn, Uri and Simmons, Joseph P. and Nelson, Leif D.},
	title = {Specification curve analysis},
	journal = {Nature Human Behaviour},
	year = {2020},
	volume = {4},
	number = {11},
	pages = {1208--1214},
	doi = {10.1038/s41562-020-0912-z}
}

@article{bonett_sample_2002,
	author = {Bonett, Douglas G.},
	title = {Sample size requirements for testing and estimating coefficient alpha},
	journal = {Journal of Educational and Behavioral Statistics},
	year = {2002},
	volume = {27},
	number = {4},
	pages = {335--340},
	doi = {10.3102/10769986027004335}
}

@article{bonett_varying_2010,
	author = {Bonett, Douglas G.},
	title = {Varying coefficient meta-analytic methods for alpha reliability},
	journal = {Psychological Methods},
	year = {2010},
	volume = {15},
	number = {4},
	pages = {368--385},
	doi = {10.1037/a0020142}
}

@article{hedges_robust_2010,
	author = {Hedges, Larry V. and Tipton, Elizabeth and Johnson, Matthew C.},
	title = {Robust variance estimation in meta-regression with dependent effect size estimates},
	journal = {Research Synthesis Methods},
	year = {2010},
	volume = {1},
	number = {1},
	pages = {39--65},
	doi = {10.1002/jrsm.5}
}

@article{konstantopoulos_fixed_2011,
	author = {Konstantopoulos, Spyros},
	title = {Fixed effects and variance components estimation in three-level meta-analysis},
	journal = {Research Synthesis Methods},
	year = {2011},
	volume = {2},
	number = {1},
	pages = {61--76},
	doi = {10.1002/jrsm.35}
}

@article{vandennoortgate_threelevel_2013,
	author = {Van den Noortgate, Wim and L{\'o}pez-L{\'o}pez, Jos{\'e} A. and Mar{\'i}n-Mart{\'i}nez, Fulgencio and S{\'a}nchez-Meca, Julio},
	title = {Three-level meta-analysis of dependent effect sizes},
	journal = {Behavior Research Methods},
	year = {2013},
	volume = {45},
	number = {2},
	pages = {576--594},
	doi = {10.3758/s13428-012-0261-6}
}

@article{pustejovsky_small_2018,
	author = {Pustejovsky, James E. and Tipton, Elizabeth},
	title = {Small-sample methods for cluster-robust variance estimation and hypothesis testing in fixed effects models},
	journal = {Journal of Business \& Economic Statistics},
	year = {2018},
	volume = {36},
	number = {4},
	pages = {672--683},
	doi = {10.1080/07350015.2016.1247004}
}

@article{viechtbauer_conducting_2010,
	author = {Viechtbauer, Wolfgang},
	title = {Conducting meta-analyses in {R} with the metafor package},
	journal = {Journal of Statistical Software},
	year = {2010},
	volume = {36},
	number = {3},
	pages = {1--48},
	doi = {10.18637/jss.v036.i03}
}

@article{efron_deconvolution_2016,
	author = {Efron, Bradley},
	title = {Empirical {Bayes} deconvolution estimates},
	journal = {Biometrika},
	year = {2016},
	volume = {103},
	number = {1},
	pages = {1--20},
	doi = {10.1093/biomet/asv068}
}

@book{rasch_probabilistic_1960,
	author = {Rasch, Georg},
	title = {Probabilistic models for some intelligence and attainment tests},
	year = {1960},
	publisher = {Danish Institute for Educational Research},
	address = {Copenhagen}
}

@article{masters_partial_1982,
	author = {Masters, Geoff N.},
	title = {A {Rasch} model for partial credit scoring},
	journal = {Psychometrika},
	year = {1982},
	volume = {47},
	number = {2},
	pages = {149--174},
	doi = {10.1007/BF02296272}
}

@article{muraki_generalized_1992,
	author = {Muraki, Eiji},
	title = {A generalized partial credit model: {Application} of an {EM} algorithm},
	journal = {Applied Psychological Measurement},
	year = {1992},
	volume = {16},
	number = {2},
	pages = {159--176},
	doi = {10.1177/014662169201600206}
}

@article{chalmers_mirt_2012,
	author = {Chalmers, R. Philip},
	title = {mirt: {A} multidimensional item response theory package for the {R} environment},
	journal = {Journal of Statistical Software},
	year = {2012},
	volume = {48},
	number = {6},
	pages = {1--29},
	doi = {10.18637/jss.v048.i06}
}

@misc{robitzsch_tam_2026,
	author = {Robitzsch, Alexander and Kiefer, Thomas and Wu, Margaret},
	title = {{TAM}: Test analysis modules},
	year = {2026},
	howpublished = {R package version 4.3-25},
	doi = {10.32614/CRAN.package.TAM}
}

@misc{pustejovsky_clubsandwich_2026,
	author = {Pustejovsky, James E.},
	title = {clubSandwich: Cluster-robust (sandwich) variance estimators with small-sample corrections},
	year = {2026},
	howpublished = {R package version 0.7.0},
	doi = {10.32614/CRAN.package.clubSandwich}
}

@misc{rcoreteam_2026,
	author = {{R Core Team}},
	title = {R: {A} language and environment for statistical computing},
	year = {2026},
	howpublished = {R Foundation for Statistical Computing, Vienna, Austria. Version 4.6.0},
	url = {https://www.R-project.org/}
}

@article{enders_impact_2004,
	author = {Enders, Craig K.},
	title = {The impact of missing data on sample reliability estimates: {Implications} for reliability reporting practices},
	journal = {Educational and Psychological Measurement},
	year = {2004},
	volume = {64},
	number = {3},
	pages = {419--436},
	doi = {10.1177/0013164403261050},
	issn = {0013-1644}
}

@article{raykov_importance_2023,
	author = {Raykov, Tenko and Anthony, James C. and Menold, Natalja},
	title = {On the importance of coefficient alpha for measurement research: {Loading} equality is not necessary for alpha's utility as a scale reliability index},
	journal = {Educational and Psychological Measurement},
	year = {2023},
	volume = {83},
	number = {4},
	pages = {766--781},
	doi = {10.1177/00131644221104972},
	issn = {0013-1644}
}

@article{kim_nicewander_1993,
	author = {Kim, Jwa K. and Nicewander, W. Alan},
	title = {Ability estimation for conventional tests},
	journal = {Psychometrika},
	year = {1993},
	volume = {58},
	number = {4},
	pages = {587--599},
	doi = {10.1007/BF02294829}
}

@article{culpepper_reliability_2013,
	author = {Culpepper, Steven Andrew},
	title = {The reliability and precision of total scores and {IRT} estimates as a function of polytomous {IRT} parameters and latent trait distribution},
	journal = {Applied Psychological Measurement},
	year = {2013},
	volume = {37},
	number = {3},
	pages = {201--225},
	doi = {10.1177/0146621612470210}
}

@book{brennan_generalizability_2001,
	author = {Brennan, Robert L.},
	title = {Generalizability theory},
	year = {2001},
	publisher = {Springer},
	address = {New York}
}

@article{pustejovsky_meta-analysis_2022,
	author = {Pustejovsky, James E. and Tipton, Elizabeth},
	title = {Meta-analysis with robust variance estimation: {Expanding} the range of working models},
	journal = {Prevention Science},
	year = {2022},
	volume = {23},
	number = {3},
	pages = {425--438},
	doi = {10.1007/s11121-021-01246-3},
	issn = {1573-6695}
}

@article{tipton_small-sample_2015,
	author = {Tipton, Elizabeth and Pustejovsky, James E.},
	title = {Small-sample adjustments for tests of moderators and model fit using robust variance estimation in meta-regression},
	journal = {Journal of Educational and Behavioral Statistics},
	year = {2015},
	volume = {40},
	number = {6},
	pages = {604--634},
	doi = {10.3102/1076998615606099},
	issn = {1076-9986}
}

@article{holm_simple_1979,
	author = {Holm, Sture},
	title = {A simple sequentially rejective multiple test procedure},
	journal = {Scandinavian Journal of Statistics},
	year = {1979},
	volume = {6},
	number = {2},
	pages = {65--70},
	url = {https://www.jstor.org/stable/4615733},
	issn = {0303-6898}
}

@article{gilbert_item-level_2025,
	author = {Gilbert, Joshua B. and Himmelsbach, Zachary and Miratrix, Luke W. and Ho, Andrew D. and Domingue, Benjamin W.},
	title = {Item-level heterogeneity in value added models: {Implications} for reliability, cross-study comparability, and effect sizes},
	journal = {Journal of Educational and Behavioral Statistics},
	year = {2025},
	pages = {10769986251393339},
	doi = {10.3102/10769986251393339},
	issn = {1076-9986}
}

@article{cho_multilevel_2019,
	author = {Cho, Sun-Joo and Shen, Jianhong and Naveiras, Matthew},
	title = {Multilevel reliability measures of latent scores within an item response theory framework},
	journal = {Multivariate Behavioral Research},
	year = {2019},
	volume = {54},
	number = {6},
	pages = {856--881},
	doi = {10.1080/00273171.2019.1596780},
	issn = {0027-3171}
}

@article{hussey_aberrant_2025,
	author = {Hussey, Ian and Alsalti, Taym and Bosco, Frank and Elson, Malte and Arslan, Ruben},
	title = {An aberrant abundance of {Cronbach's} alpha values at .70},
	journal = {Advances in Methods and Practices in Psychological Science},
	year = {2025},
	volume = {8},
	number = {1},
	pages = {25152459241287123},
	doi = {10.1177/25152459241287123},
	issn = {2515-2459}
}

@article{spearman_correlation_1910,
	author = {Spearman, Charles},
	title = {Correlation calculated from faulty data},
	journal = {British Journal of Psychology},
	year = {1910},
	volume = {3},
	number = {3},
	pages = {271--295},
	doi = {10.1111/j.2044-8295.1910.tb00206.x}
}

@article{brown_some_1910,
	author = {Brown, William},
	title = {Some experimental results in the correlation of mental abilities},
	journal = {British Journal of Psychology},
	year = {1910},
	volume = {3},
	number = {3},
	pages = {296--322},
	doi = {10.1111/j.2044-8295.1910.tb00207.x}
}

@article{gelman_difference_2006,
	author = {Gelman, Andrew and Stern, Hal},
	title = {The difference between ``significant'' and ``not significant'' is not itself statistically significant},
	journal = {The American Statistician},
	year = {2006},
	volume = {60},
	number = {4},
	pages = {328--331},
	doi = {10.1198/000313006X152649}
}

@misc{zhang_realistic_2025,
	author = {Zhang, Lijin and Liu, Yiqing and Molenaar, Dylan and Domingue, Benjamin W.},
	title = {Realistic simulation of item difficulties},
	year = {2025},
	doi = {10.31234/osf.io/jbhxy_v1},
	url = {https://doi.org/10.31234/osf.io/jbhxy_v1}
}

@article{nalbandyan_signposts_2026,
	author = {Nalbandyan, Roza and Gilbert, Joshua B. and Franco, Vithor R. and Domingue, Benjamin W.},
	title = {Signposts on the path from nominal to ordinal scales: Moving from a discrete to a continuous view},
	journal = {Educational and Psychological Measurement},
	year = {2026},
	pages = {00131644261440556},
	publisher = {SAGE Publications Inc},
	doi = {10.1177/00131644261440556},
	url = {https://doi.org/10.1177/00131644261440556},
	issn = {0013-1644}
}

@article{domingue_intermodel_2024,
	author = {Domingue, Benjamin W. and Kanopka, Klint and Kapoor, Radhika and Pohl, Steffi and Chalmers, R. Philip and Rahal, Charles and Rhemtulla, Mijke},
	title = {The {InterModel} vigorish as a lens for understanding (and quantifying) the value of item response models for dichotomously coded items},
	journal = {Psychometrika},
	year = {2024},
	volume = {89},
	number = {3},
	pages = {1034--1054},
	doi = {10.1007/s11336-024-09977-2},
	url = {https://www.cambridge.org/core/journals/psychometrika/article/abs/intermodel-vigorish-as-a-lens-for-understanding-and-quantifying-the-value-of-item-response-models-for-dichotomously-coded-items/F61C75F6F945A5B13F73C6128EB83998},
	issn = {0033-3123, 1860-0980}
}

@article{gilbert_estimating_2025,
	author = {Gilbert, Joshua B. and Himmelsbach, Zachary and Soland, James and Joshi, Mridul and Domingue, Benjamin W.},
	title = {Estimating heterogeneous treatment effects with item-level outcome data: Insights from item response theory},
	journal = {Journal of Policy Analysis and Management},
	year = {2025},
	volume = {44},
	number = {4},
	pages = {1417--1449},
	doi = {10.1002/pam.70025},
	url = {https://onlinelibrary.wiley.com/doi/abs/10.1002/pam.70025},
	issn = {1520-6688}
}

@article{gilbert_conditional_2026,
	author = {Gilbert, Joshua B. and Young, William S. and Himmelsbach, Zachary and Ulitzsch, Esther and Domingue, Benjamin W.},
	title = {Conditional dependencies between response time and item discrimination: An item-level meta-analysis},
	journal = {Educational and Psychological Measurement},
	year = {2026},
	pages = {00131644261426972},
	publisher = {SAGE Publications Inc},
	doi = {10.1177/00131644261426972},
	url = {https://doi.org/10.1177/00131644261426972},
	issn = {0013-1644}
}

@misc{liu_comparing_2026,
	author = {Liu, Yiqing and Zhang, Lijin and Domingue, Benjamin},
	title = {Comparing compensatory and noncompensatory {MIRT} models using large-scale item response data},
	year = {2026},
	publisher = {OSF},
	doi = {10.31234/osf.io/r9b3y_v1},
	url = {https://doi.org/10.31234/osf.io/r9b3y_v1}
}

@article{greco_meta-analysis_2018,
	author = {Greco, Lindsey M. and O'Boyle, Ernest H. and Cockburn, Bethany S. and Yuan, Zhenyu},
	title = {Meta-analysis of coefficient alpha: {A} reliability generalization study},
	journal = {Journal of Management Studies},
	year = {2018},
	volume = {55},
	number = {4},
	pages = {583--618},
	doi = {10.1111/joms.12328},
	url = {https://onlinelibrary.wiley.com/doi/abs/10.1111/joms.12328},
	issn = {1467-6486}
}

@article{lance_sources_2006,
	author = {Lance, Charles E. and Butts, Marcus M. and Michels, Lawrence C.},
	title = {The sources of four commonly reported cutoff criteria: What did they really say?},
	journal = {Organizational Research Methods},
	year = {2006},
	volume = {9},
	number = {2},
	pages = {202--220},
	publisher = {SAGE Publications Inc},
	doi = {10.1177/1094428105284919},
	url = {https://journals.sagepub.com/doi/abs/10.1177/1094428105284919},
	issn = {1094-4281}
}

@article{kelley_confidence_2016,
	author = {Kelley, Ken and Pornprasertmanit, Sunthud},
	title = {Confidence intervals for population reliability coefficients: Evaluation of methods, recommendations, and software for composite measures},
	journal = {Psychological Methods},
	year = {2016},
	volume = {21},
	number = {1},
	pages = {69--92},
	publisher = {American Psychological Association},
	doi = {10.1037/a0040086},
	issn = {1939-1463},
	address = {US}
}

@article{sijtsma_part_2021,
	author = {Sijtsma, Klaas and Pfadt, Julius M.},
	title = {Part {II}: On the use, the misuse, and the very limited usefulness of {Cronbach’s} alpha: Discussing lower bounds and correlated errors},
	journal = {Psychometrika},
	year = {2021},
	volume = {86},
	number = {4},
	pages = {843--860},
	doi = {10.1007/s11336-021-09789-8},
	url = {https://www.cambridge.org/core/journals/psychometrika/article/part-ii-on-the-use-the-misuse-and-the-very-limited-usefulness-of-cronbachs-alpha-discussing-lower-bounds-and-correlated-errors/9B9076F685B9F4E50B4C03C679EDB23D},
	issn = {0033-3123, 1860-0980}
}

@article{berge_greatest_2004,
	author = {ten Berge, Jos M. F. and Sočan, Gregor},
	title = {The greatest lower bound to the reliability of a test and the hypothesis of unidimensionality},
	journal = {Psychometrika},
	year = {2004},
	volume = {69},
	number = {4},
	pages = {613--625},
	doi = {10.1007/BF02289858},
	url = {https://www.cambridge.org/core/journals/psychometrika/article/abs/greatest-lower-bound-to-the-reliability-of-a-test-and-the-hypothesis-of-unidimensionality/CAD6336326D42F5E0EE7D8D6EF875223},
	issn = {0033-3123, 1860-0980}
}

@article{cronbach_my_2004,
	author = {Cronbach, Lee J. and Shavelson, Richard J.},
	title = {My current thoughts on coefficient alpha and successor procedures},
	journal = {Educational and Psychological Measurement},
	year = {2004},
	volume = {64},
	number = {3},
	pages = {391--418},
	publisher = {SAGE Publications Inc},
	doi = {10.1177/0013164404266386},
	url = {https://doi.org/10.1177/0013164404266386},
	issn = {0013-1644}
}

@article{scherer_tutorial_2020,
	author = {Scherer, Ronny and Teo, Timothy},
	title = {A tutorial on the meta-analytic structural equation modeling of reliability coefficients},
	journal = {Psychological Methods},
	year = {2020},
	volume = {25},
	number = {6},
	pages = {747--775},
	publisher = {American Psychological Association},
	doi = {10.1037/met0000261},
	issn = {1939-1463},
	address = {US}
}

@article{lopez-ibanez_reliability_2024,
	author = {López-Ibáñez, Carmen and López-Nicolás, Rubén and Blázquez-Rincón, Desirée M. and Sánchez-Meca, Julio},
	title = {Reliability generalization meta-analysis: comparing different statistical methods},
	journal = {Current Psychology},
	year = {2024},
	volume = {43},
	number = {20},
	pages = {18275--18293},
	doi = {10.1007/s12144-023-05604-y},
	url = {https://link.springer.com/10.1007/s12144-023-05604-y},
	issn = {1046-1310, 1936-4733}
}

@article{weston_recommendations_2019,
	author = {Weston, Sara J. and Ritchie, Stuart J. and Rohrer, Julia M. and Przybylski, Andrew K.},
	title = {Recommendations for increasing the transparency of analysis of preexisting data sets},
	journal = {Advances in Methods and Practices in Psychological Science},
	year = {2019},
	volume = {2},
	number = {3},
	pages = {214--227},
	publisher = {SAGE Publications Inc},
	doi = {10.1177/2515245919848684},
	url = {https://doi.org/10.1177/2515245919848684},
	issn = {2515-2459}
}

@article{gilbert_sensitivity_2026,
  author = {Gilbert, Joshua B. and Soland, James G. and Domingue, Benjamin W.},
  title = {The sensitivity of value-added estimates to test scoring decisions},
  journal = {Educational Measurement: Issues and Practice},
  year = {2026},
  volume = {45},
  number = {1},
  pages = {e70011},
  doi = {10.1111/emip.70011}
}

@article{ahmed_heterogeneity_2025,
  author = {Ahmed, Ishita and Bertling, Masha and Zhang, Lijin and Ho, Andrew D. and Loyalka, Prashant and Xue, Hao and Rozelle, Scott and Domingue, Benjamin W.},
  title = {Heterogeneity of item-treatment interactions masks complexity and generalizability in randomized controlled trials},
  journal = {Journal of Research on Educational Effectiveness},
  year = {2025},
  volume = {18},
  number = {4},
  pages = {854--877},
  doi = {10.1080/19345747.2024.2361337}
}

@article{satterthwaite_approximate_1946,
  author = {Satterthwaite, F. E.},
  title = {An approximate distribution of estimates of variance components},
  journal = {Biometrics Bulletin},
  year = {1946},
  volume = {2},
  number = {6},
  pages = {110--114},
  doi = {10.2307/3002019}
}

@article{acciai_estimating_2023,
  author = {Acciai, Claudia and Schneider, Jesper W. and Nielsen, Mathias W.},
  title = {Estimating social bias in data sharing behaviours: An open science experiment},
  journal = {Scientific Data},
  year = {2023},
  volume = {10},
  number = {1},
  pages = {233},
  doi = {10.1038/s41597-023-02129-8}
}
